\newtheorem{theorem}{Theorem}
\newtheorem{lemma}[theorem]{Lemma}
\newtheorem{corollary}[theorem]{Corollary}
\newtheorem{proposition}[theorem]{Proposition}
\newcommand{\ket}[1]{|#1\rangle}               
\newcommand{\bra}[1]{\langle #1|}              
\newcommand{\dyad}[2]{\ket{#1}\bra{#2}}        
\newcommand{\Tr}{{\rm Tr}}                     
\newcommand{\ii}{\mathrm{i}}
\newcommand{\Eq}[1]{Eq.~\eqref{#1}}
\newcommand{\AC}{\mathcal{A}}
\newcommand{\CC}{\mathcal{C}}
\newcommand{\EC}{\mathcal{E}}
\newcommand{\FC}{\mathcal{F}}
\newcommand{\GC}{\mathcal{G}}
\newcommand{\HC}{\mathcal{H}}
\newcommand{\LC}{\mathcal{L}}
\newcommand{\PC}{\mathcal{P}}
\newcommand{\RC}{\mathcal{R}}
\newcommand{\SC}{\mathcal{S}}
\newcommand{\TC}{\mathcal{T}}
\newcommand{\UC}{\mathcal{U}}
\newcommand{\VC}{\mathcal{V}}
\newcommand{\CP}{\mathrm{CP}}
\newcommand{\ZZ}{\mathbb{Z}}
\def\outl#1{\par{\medskip\noindent\hspace*{.5cm}\bf  \mathversion{bold}#1\mathversion{normal}\smallskip} }  \def\xa{} \def\xb{}  
\def\outl#1{}  \def\xa{} \def\xb{}  
\long\def\ca#1\cb{} 
\begin{document}

\title{Tripartite Entanglement in Qudit Stabilizer States and Application in Quantum Error Correction}

\author{Shiang Yong Looi}
\email{slooi@andrew.cmu.edu}
\affiliation{Department of Physics, Carnegie Mellon University, Pittsburgh,
Pennsylvania 15213, USA}

\author{Robert B. Griffiths}
\email{rgrif@andrew.cmu.edu}
\affiliation{Department of Physics, Carnegie Mellon University, Pittsburgh,
Pennsylvania 15213, USA}

\date{Version of 8th of July 2011}

\begin{abstract}

Consider a stabilizer state on $n$ qudits, each of dimension $D$ with $D$ being a prime or a squarefree integer, divided into three mutually disjoint sets or parts.  Generalizing a result of Bravyi et al. [J. Math. Phys. \textbf{47}, 062106 (2006)] for qubits ($D=2$), we show that up to local unitaries on the three parts the state can be written as a tensor product of unentangled single-qudit states, maximally entangled EPR pairs, and tripartite GHZ states.   We employ this result to obtain a complete characterization of the properties of a class of channels associated with stabilizer error-correcting codes, along with their complementary channels.

\end{abstract}

\pacs{03.67.Mn, 03.67.Hk}
\maketitle
\tableofcontents

\section{Introduction}
\label{section1}

\xa
\outl{Tripartite entanglement is not solved.}
\xb

The study of entangled quantum states of systems consisting of two or more
parts is a central problem in quantum information theory.  The Schmidt
decomposition provides a fairly complete characterization of the pure states
of a bipartite system. However, mixed states on bipartite systems and pure
states on systems of three or more parts present a much more difficult
problem---see \cite{quant-ph/0504163} for a comprehensive review---and a
relatively complete understanding of the situation exists only for some very
special cases.

The present paper considers the special case of (pure) stabilizer states on
$n$ qudits, each of dimension $D$, and addresses the problem of characterizing
the corresponding tripartite state when the $n$ qudits are partitioned into
three disjoint sets $A$, $B$, and $C$, and arbitrary unitary transformations
are allowed on each of the three parts. The case of qubits, $D=2$, was studied
by Bravyi et al.\ \cite{quantph.0504208}, who showed that such a stabilizer
state is equivalent, up to local unitaries on the three parts, to a tensor
product of pure unentangled single qubit states; maximally entangled two-qubit
states or EPR pairs, with one qubit in one part and the other qubit in a
different part; and GHZ states on three qubits, one lying in each part.  In
this paper we generalize these results to the case $D > 2$, where $D$ is
either a prime or a squarefree integer (i.e., not divisible by the square of
any integer greater than 1).

\xa
\outl{Introduction to stabilizer formalism and graph states}
\xb

The stabilizer formalism \cite{PhysRevA.54.1862, quantph.9705052,  QuantumComputation} was first introduced to simplify the construction and analysis of quantum error correction codes. Soon thereafter it was generalized from qubits to higher dimensional qudits \cite{quantph.9608048,   quantph.0005008}. Most of the codes known when the formalism was introduced, and the majority of those discovered since, are stabilizer codes. The formalism has also been used for measurement-based quantum computation \cite{PhysRevA.68.022312} and fault-tolerant topological quantum computation \cite{quant-ph/9707021}. There has been a lot of research on single-qubit local unitary (LU) and single-qubit local Clifford (LC) equivalence of qubit stabilizer states/graph states \cite{PhysRevA.72.014307, PhysRevA.71.062323, PhysRevA.75.032325} but here we consider partitionings where each part can have several qudits and arbitrary gates acting on qudits belonging to the same part are permitted.

\xa
\outl{Previous work on stabilizer code channels and subset information groups}
\xb

In \cite{PhysRevA.81.032326} we studied a class of channels obtained from qudit stabilizer (equivalently, additive graph) codes where a subset of the carrier qudits is lost.  We fully characterized their information carrying capacities in terms of \emph{subset information groups}, a concept related to the notion of correctable algebras introduced in \cite{PhysRevA.76.042303}. We also provided an efficient algorithm to find the subset information group. In this paper we adopt the name \emph{stabilizer code channels} for such channels.

\xa
\outl{Overview of paper.}
\xb

The paper is organized as follows: Section~\ref{section2} introduces various
concepts that will be used later: Pauli and Clifford operators, one- and
two-qudit gates, stabilizer and graph states.  It also contains some
mathematical results, one of which, Corollary \ref{corollary5}, is of some
interest by itself: it allows the decomposition of stabilizer states into a
tensor product of such states when $D=d_1d_2 \cdots$ is a product of mutually coprime factors.  In the following Section \ref{section3} we prove that any bipartite stabilizer state in the case of squarefree $D$ is equivalent, up to unitaries on the two parts, to a collection of unentangled single-qubit states and maximally entangled EPR pairs. This could have been studied using the Schmidt decomposition, but the techniques used here are also needed in the following section.

The central result of this paper is the tripartition Theorem~\ref{theorem7} stated and proved in Section \ref{section4}. It shows that when $D$ is squarefree a stabilizer state on three parts can be decomposed into a tensor product of single-qudit states, two-qudit EPR pairs and three-qudit GHZ states. With the help of Choi-Jamio\l kowski isomorphism or map-state duality, this result is applied in Section \ref{section5} to the stabilizer code channels where we show they can always be decomposed into a product of a perfect quantum channel, a perfectly decohering channel, and a depolarizing channel (not all of which need be present). We also prove that the subset information groups corresponding to a stabilizer code channel and its complementary channel obey a duality relation, in that one completely specifies the other. While the results are specific to stabilizer code channels, we show that they can also be used to provide bounds on channel capacities for some other cases.

Section \ref{conclusion} summarizes our findings and suggests some directions
for future research. 


\section{Preliminary Concepts and Definitions}
\label{section2}

\subsection{Qudit Pauli Operators}
\label{subsection2a}

\xa
\outl{Qudit Pauli operators and its properties.}
\xb

Most of the following preliminary concepts have been introduced in \cite{PhysRevA.81.032326}, \cite{PhysRevA.78.042303} and we present them here again for completeness. We generalize the notion of Pauli operators to higher dimensional Hilbert spaces where $D \geqslant 2$. The $X$ and $Z$ Pauli operators are defined in the computational basis as
\begin{align}
\label{eqn1}
Z=\sum_{j=0}^{D-1}\omega^j\dyad{j}{j},\quad X=\sum_{j=0}^{D-1}\dyad{j}{j+1},
\end{align}
and they satisfy
\begin{align}
\label{eqn2}
X^D=Z^D=I,\quad XZ=\omega ZX,\quad \omega = \mathrm{e}^{2 \pi \ii /D},
\end{align}
where the addition of integers in Eq.~\eqref{eqn1} is modulo $D$. For a collection of $n$ qudits we use subscripts to identify the corresponding Pauli operators unless otherwise stated: thus $Z_i$ and $X_i$ operate on the space of qudit $i$. The Hilbert spaces of individual qudits are denoted by $\HC_i$, and that of $n$ qudits by $\HC^{\otimes n} := \HC_1 \otimes \HC_2 \otimes \cdots \otimes \HC_n$. Operators of the form
\begin{equation}
\label{eqn3}
\lambda^{\gamma}X_1^{x_1}Z_1^{z_1}\otimes X_2^{x_2}Z_2^{z_2}\otimes\cdots
\otimes X_n^{x_n}Z_n^{z_n}
\end{equation} 
will be referred to as \emph{Pauli products}, where $\lambda := \mathrm{e}^{2 \pi \mathrm{i}/(2D)}$ (so $\lambda^2 = \omega$) and $\gamma$ is an integer in $\ZZ_{2D}$, the ring of integers modulo $2D$. For a fixed $n$, the collection of all possible Pauli products in Eq.~\eqref{eqn3} forms a group under operator multiplication, the \emph{Pauli group} $\PC_n$. 

For every $p \in \PC_n$, $p^D$ is either $I$ or $-I$. The \emph{order} of a Pauli product $p \in \PC_n$ is defined as the smallest integer $1 \leqslant \alpha \leqslant D$ such that $p^\alpha \propto I$. Our definition of order is nonstandard in that we only require the power of the Pauli products to be \emph{proportional} to the identity. Note that the order of any Pauli product must divide $D$.

While $\PC_n$ is not abelian, it has the property that any two elements \emph{commute up to a phase}: $p_1p_2 = \omega^{\alpha_{12}} p_2p_1$, with $\alpha_{12}$ an integer in $\ZZ_D$ that depends on $p_1$ and $p_2$. One can find subgroups of $\PC_n$ that are abelian, for example the set of Pauli products with only powers of $Z$ on every qudit.

\begin{proposition}
\label{proposition1}
Let $\AC $ be set of mutually commuting Pauli products in $\PC_n$ (for example, abelian subgroups of $\PC_n$). Then $\AC$ can have at most $D^n$ linearly independent elements. 
\end{proposition}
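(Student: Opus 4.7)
The plan is to exploit the fact that mutually commuting normal operators are simultaneously diagonalizable, and then simply count the dimension of the space of diagonal matrices on $\HC^{\otimes n}$. First I would observe that every element of $\PC_n$, being a product of the unitary generators $X_i$, $Z_i$ together with a scalar phase $\lambda^{\gamma}$, is itself unitary and therefore normal. The hypothesis that the elements of $\AC$ are mutually commuting must be read as genuine commutativity $p_1 p_2 = p_2 p_1$: although arbitrary Pauli products only commute up to a phase $\omega^{\alpha_{12}}$, commutativity in $\AC$ forces $\alpha_{12} \equiv 0 \pmod{D}$, so the phase collapses to $1$.

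Next I would invoke the spectral theorem for commuting families of normal operators to produce a single orthonormal basis of $\HC^{\otimes n}$ in which every $p \in \AC$ is represented by a diagonal matrix. Since $\dim \HC^{\otimes n} = D^n$, the space of $D^n \times D^n$ diagonal matrices in this fixed basis is a complex vector subspace of the operator space of dimension exactly $D^n$. Any collection of operators that are all diagonal in a common basis therefore contains at most $D^n$ linearly independent members, which yields the claimed bound.

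I do not anticipate a substantive obstacle in this argument; it is purely linear-algebraic and uses nothing about the Pauli group beyond unitarity and the assumption of mutual commutativity. The only minor point worth emphasizing in the write-up is the aforementioned reduction from commutation-up-to-phase (the generic feature of $\PC_n$) to true commutativity, which is needed to legitimately apply simultaneous diagonalization. Note also that the same reasoning shows that any maximal abelian subgroup of $\PC_n$ spans at most a $D^n$-dimensional subspace of operators, a fact that will be useful when counting stabilizer generators later in the paper.
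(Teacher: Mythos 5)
Your proof is correct and is essentially the paper's own argument, made explicit: the paper likewise views elements of $\AC$ as $D^n \times D^n$ matrices and appeals to simultaneous diagonalization, with your dimension count of the diagonal subspace supplying the justification it leaves implicit. Your remark that mutual commutativity in $\AC$ means genuine commutation (phase $\omega^{\alpha_{12}}=1$), licensing the spectral theorem, is a worthwhile clarification but does not change the route.
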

\begin{proof}
The elements of $\AC$ can be viewed as $D^n$-by-$D^n$ matrices. Then it is impossible to simultaneously diagonalize $D^n + 1$ or more mutually commuting \emph{and} linearly independent $D^n$-by-$D^n$ matrices.
\end{proof}

\xa
\outl{Pauli operators as an orthonormal basis.}
\xb

The collection of $D^{2n}$ Pauli products in Eq.~\eqref{eqn3} with $\gamma=0$, i.e. a pre-factor of $1$, forms an orthonormal basis of $\LC(\HC^{\otimes n})$, the space of linear operators on $\HC^{\otimes n}$, with respect to the Hilbert-Schmidt inner product
\begin{equation}
\label{eqn4}
\frac{1}{D^n}\Tr\{q_1^\dagger q_2^{}\} = \delta_{q_1,q_2}, \; \forall q_1,q_2 \in \PC_n \text{ with pre-factor of 1.}
\end{equation}



\subsection{Single-Qudit and Two-Qudit Clifford Operators}
\label{subsection2b}

\xa
\outl{Three single-qudit Clifford operators.}
\xb

Having defined Pauli operators, we now generalize other single-qubit and two-qubit operators to $D \geqslant 2$. The qudit generalization of the Hadamard gate is the
\emph{Fourier gate}
\begin{equation}
\label{eqn5}
F := \frac{1}{\sqrt{D}}\sum_{j=0}^{D-1}\omega^{jk}\dyad{j}{k}.
\end{equation}

For an invertible integer $\alpha \in \ZZ_D$ (i.e. integer for which there exists $\bar \alpha \in \ZZ_D$ such that $\alpha \bar \alpha \equiv 1 \bmod D$), we define a \emph{multiplicative gate}
\begin{equation}
\label{eqn6}
S^{(\alpha)}:=\sum_{j=0}^{D-1}\dyad{j}{\alpha j}.
\end{equation}
The requirement that $\alpha$ be invertible ensures that $S^{(\alpha)}$ is unitary. (For $D=2$, the only invertible integer is $\alpha=1$, hence $S^{(\alpha)}$ is just the identity.)

Next we define the \emph{phase gate} as
\begin{align}
\label{eqn7}
W := \left\{
\begin{array}{ll}
\sum_{j=0}^{D-1} \lambda^{-j(j+2)} \ket{j} \bra{j} & \quad \text{if $D$ is even} \\
\sum_{j=0}^{D-1} \lambda^{-j(j+1)} \ket{j} \bra{j} & \quad \text{if $D$ is odd.}
\end{array}\right.
\end{align}
where $\lambda = \mathrm{e}^{2 \pi \ii / (2D)}$. The phase gate was first studied by Nielsen et al.\  in \cite{PhysRevA.66.022317} for the general $D$ case.

\xa
\outl{Action of these single-qudit operators on Paulis.}
\xb

The three single-qudit operators defined above as well as the Pauli operators defined in Eq.~\eqref{eqn1} are examples of Clifford unitaries, by which we mean unitaries that map Pauli products to Pauli products under conjugation. For instance, $FZF^\dag  = X$ and $FXF^\dag  = Z^{-1}$. The results of conjugating the Pauli operators by $F$, $S^{(\alpha)}$ and $W$ are summarized in Table \ref{table1}.

\begin{table}
\begin{tabular}{|c|c|c|c|}
\hline
Pauli operator    & $\;F\;$ & $\;S^{(\alpha)}\;$ & $\;W\;$ \\
\hline
\hline
$Z$ & $X$ & $Z^\alpha$ & $Z$\\
\hline
$X$ & $\;Z^{-1}\;$ & $X^{\bar \alpha}$ & $\;\lambda XZ\;$ (even $D$)\;\\
 & & & $\;XZ\;$ (odd $D$) \\
\hline
\end{tabular}
\caption{The result of conjugation of Pauli operators by one-qudit gates $F, S^{(\alpha)}$ and $W$. ($\bar \alpha$ is the multiplicative inverse of $\alpha$ mod $D$ and $\lambda = \mathrm{e}^{2 \pi \ii / (2D)}$.)}
\label{table1}
\end{table}

\xa
\outl{Two-qudit Clifford operators.}
\xb


The generalizations to $D\geqslant2$  of CP and CNOT gates are the Clifford unitaries
\begin{equation}
\label{eqn8}
\CP_{12} = \sum_{j=0}^{D-1}\dyad{j}{j}_1 \otimes Z^j_2 = \sum_{j,k=0}^{D-1} \omega^{jk}\dyad{j}{j}_1 \otimes \dyad{k}{k}_2
\end{equation}
and
\begin{equation}
\label{eqn9}
\mathrm{CNOT}_{12} := \sum_{j=0}^{D-1}\dyad{j}{j}_1 \otimes X_2^j = \sum_{j,k=0}^{D-1}\dyad{j}{j}_1 \otimes \dyad{k}{k+j}_2,
\end{equation}
where qudit 1 is the control while qudit 2 is the target. The CP and CNOT gates are related by a local Fourier gate defined in Eq.~\eqref{eqn5}, similar to the $D=2$ case,
\begin{equation}
\label{eqn10}
\mathrm{CNOT}_{12}=(I_1 \otimes F_2) \CP_{12} (I_1 \otimes F_2)^\dag.
\end{equation}

\begin{proposition}
\label{proposition2}
For $D$ prime, let $p \in \PC_n$ be a Pauli product on $n$ qudits [\Eq{eqn3}] and assume that $p$ is not the identity on qudit 1, i.e. $x_1 \neq 0$ or $z_1 \neq 0$ or both. Then there exists a Clifford unitary $U$ such that $UpU^\dag \propto X_1 I_2 \cdots I_n$. Further if $p^D = I$, then it is possible to have $UpU^\dag = X_1 I_2 \cdots I_n$.
\end{proposition}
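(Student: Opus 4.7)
The plan is to construct $U$ as a composition of elementary Clifford gates, progressively simplifying $p$ one qudit at a time. Writing $p = \lambda^\gamma X_1^{x_1} Z_1^{z_1} \otimes p'$ with $p'$ a Pauli product on qudits $2,\ldots,n$, my first step is to reduce the qudit-$1$ factor to $X_1$ using single-qudit Cliffords on qudit $1$ alone, which do not touch any other qudit. If $x_1 = 0$ I first apply $F_1$ to convert $Z_1^{z_1}$ into a nontrivial $X$-power (using $z_1 \neq 0$). Then, exploiting the $W$-rule from Table~\ref{table1}, I apply a suitable power of $W_1$ to cancel the residual $Z_1$ part, and finally apply $S^{(x_1)}_1$ to turn $X_1^{x_1}$ into $X_1$. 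The hypothesis that $D$ is prime is used here to guarantee that the nonzero $x_1 \in \ZZ_D$ is invertible, so that $S^{(x_1)}$ is well-defined and the $W$-cancellation admits a solution.

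Next, for each qudit $i = 2,\ldots,n$ in turn, I perform the same single-qudit reduction on qudit $i$ (via $F_i$, $W_i$, and $S^{(\cdot)}_i$), converting any nontrivial factor $X_i^{x_i} Z_i^{z_i}$ into $X_i$. The current operator then has the form $\mu X_1 \otimes X_i \otimes (\text{remainder})$ up to an overall scalar $\mu$. A suitable power of $\mathrm{CNOT}_{1i}$ now cancels $X_i$ against qudit $1$'s $X_1$: by the definition in \Eq{eqn9}, conjugation by $\mathrm{CNOT}_{1i}$ propagates an $X$ on the control to a power of $X$ on the target while leaving an $X$ on the target alone and not acting on any other qudit, so the right power turns $X_1 \otimes X_i$ into $X_1 \otimes I_i$. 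Iterating over $i$ yields $UpU^\dag = \mu X_1 \otimes I_2 \otimes \cdots \otimes I_n$ for some scalar $\mu$, which establishes the first claim.

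For the second claim, Clifford conjugation commutes with taking $D$-th powers, so the assumption $p^D = I$ forces $(\mu X_1)^D = \mu^D X_1^D = \mu^D I = I$, whence $\mu = \omega^k$ for some $k \in \ZZ_D$. Appending one further conjugation by the Pauli $Z_1^k$ (itself a Clifford) and using the commutation rule $Z_1 X_1 Z_1^{-1} = \omega^{-1} X_1$ from \Eq{eqn2}, this residual scalar is cancelled and one obtains $UpU^\dag = X_1 I_2 \cdots I_n$ exactly.

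The main hurdle I anticipate is the bookkeeping of phase factors, which are introduced both by the $W$-gate (with slightly different rules for even and odd $D$, per \Eq{eqn7}) and by every reordering of non-commuting Pauli factors along the way. A minor side-case is $D = 2$, where $S^{(\alpha)}$ is trivial; but there $x_1 \in \{0,1\}$, so after the initial $F$-step (if needed) the $x_1 = 1$ case requires no further multiplicative adjustment.
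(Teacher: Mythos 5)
Your proposal is correct and follows essentially the same route as the paper's own proof: reduce the qudit-1 factor to $X_1$ via $F$, repeated $W$ conjugations (solvable since $x_1$ is invertible for prime $D$), and $S^{(x_1)}$; clear the remaining qudits with single-qudit reductions followed by $\mathrm{CNOT}_{1i}$; and remove the residual phase, which $p^D = I$ forces to be a power of $\omega$, by conjugating with a power of $Z_1$. Your treatment of the phase bookkeeping and the $D=2$ side case is somewhat more explicit than the paper's, but the argument is the same.
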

\begin{proof}
If $x_1 = 0$, then conjugate $p$ by the Fourier gate, $F$ so that $x_1 \neq 0$. Then transform $X_1^{x_1} Z_1^{z_1}$ to $X_1^{x_1}$ by conjugating it with the $W$ gate a sufficient number of times. Next use the $S^{(\alpha)}$ gate to produce $X_1$. See Table \ref{table1} for the result of these conjugations. (Note that we relied on the fact that $\ZZ_D$ is a field when $D$ is prime in the last two operations. The more general result for arbitrary $D$ is studied in \cite{PhysRevA.66.022317}.)
If $p$ is now the identity on all the other qudits $i=2,3,\ldots, n$ we are
done. Otherwise, for each non-identity qudit, set the Pauli operator to $X$
employing the procedure above. If at this point $p$ is not identity on qudit
$2$, i.e. $p = X_1 X_2 \cdots$, then $X_2$ can be changed to $I_2$ by
performing $\mathrm{CNOT}_{12}$. This is repeated where needed so that the
Pauli product is the identity on every qudit except qudit 1, which proves
$UpU^\dag$ is proportional to $X_1$. If now $p^D = I$, any remaining phase is
necessarily some power of $\omega$, and can be removed by conjugation with
powers of $Z_1$.
\end{proof}

\subsection{Stabilizer Codes and States. Partitions}
\label{subsection2c}

\xa
\outl{Definition of stabilizer codes and states}
\xb

Let $\SC \subset \PC_n$ be an abelian subgroup consisting of linearly
independent Pauli products. Then $|\SC|$ must divide $D^n$ and $s^D = I$ for
all $ s \in \SC$. (For prime $D$, every element except the identity is
necessarily of order $D$ so $|\SC|$ is always a power of $D$.)
Given $\SC$, define the set of states, $\CC := \{ \ket{\psi} \in \HC^{\otimes
  n} : s\ket{\psi} = \ket{\psi}, \; \forall \; s \in \SC\}$. It is easy to
check that $\CC$ forms a linear space, which we call the
\emph{stabilizer code}, with $\SC$ its \emph{stabilizer group} %
\footnote{Note that all elements of $\SC$ leave \emph{each} element of the
  subspace $\CC$ unchanged.  The larger subgroup that maps $\CC$ into itself
  without the requirement that each $\ket{\psi}$ in $\CC$ be mapped to itself
  could also be called its ``stabilizer,'' but we are not using ``stabilizer''
  in this second sense.}. %

In \cite{PhysRevA.78.042303} it was shown that $\CC$ and $\SC$ are dual in
the sense that one completely specifies the other and they satisfy the
relation $|\SC| \times \dim(\CC) = D^n$; $\dim(\CC)$ is the dimension of
$\CC$. In quantum error correction literature, if $\dim(\CC) = D^k$ for some
integer $0 \leqslant k \leqslant n$, then it is customary to write $\CC = [[n,
k]]_D$ because one can think of encoding $k$ qudits in the $D^k$-dimensional
subspace contained in the space of $n$ carrier qudits.
Let $U$ be a Clifford unitary and $\SC$ a stabilizer group with $\CC$ being
its corresponding stabilizer code. Then $\SC' := U \SC U^\dag = \{ UsU^\dag :
s \in \SC\}$ is also a stabilizer group stabilizing the code $\CC' = \{ U
\ket{\psi} :  \ket{\psi} \in \CC\}$. For a detailed review on Clifford
unitaries and stabilizer states for arbitrary $D$ see \cite{PhysRevA.71.042315}.

If $|\SC| = D^n$, then $\SC$ stabilizes a unique state and we call it the
\emph{stabilizer state}, denoted by $\ket{\SC}$. The projector onto the state
can be written as a sum of elements in $\SC$, as shown in \cite{quantph.0406168}
\begin{align}
\label{eqn11}
\ket{\SC} \bra{\SC} = \frac{1}{D^n} \sum_{s \in \SC} s.
\end{align}

\xa
\outl{Two examples of stabilizer states.}
\xb

Two simple examples of stabilizer states are the EPR pair and the GHZ state,
expressed below for any $D \geqslant 2$ with their respective stabilizer
groups,
\begin{align}
\label{eqn12}
\ket{\text{EPR}}_{12} &= \frac{1}{\sqrt{D}} \sum_{i=0}^{D-1} \ket{i}_{1} \ket{i}_{2} \nonumber,\\
\SC &= \langle X_1 X_2, Z_1 Z_2^{-1} \rangle
\end{align}
and
\begin{align}
\label{eqn13}
\ket{\text{GHZ}}_{123} &= \frac{1}{\sqrt{D}} \sum_{i=0}^{D-1} \ket{i}_{1} \ket{i}_{2} \ket{i}_{3},\nonumber \\
\SC &= \langle X_1 X_2 X_3, Z_1 Z_2^{-1} , Z_1 Z_3^{-1} \rangle
\end{align}
where the angular brackets denote the group generated by products of the elements in the list.

\xa
\outl{Generators of stabilizer group.}
\xb

When $D$ is prime, $\SC$ can always be generated by $n$ suitably chosen group
elements, $\SC = \langle s_1, s_2, \ldots, s_n \rangle$ such that the order of
each $s_i$ is $D$. For non-prime $D$, one might need more than $n$ generators is some cases. We call these group elements \emph{stabilizer generators}
or generators. Note that the set of generators is not unique -- there are many distinct choices of generators that generate the same group, for example $\SC = \langle s_1 s_2, s_2, \ldots, s_n \rangle$.

\begin{proposition}
\label{proposition3}
Let $\SC$ be a stabilizer group with $D^n$ elements where $D$ is prime. Let $\TC = \langle t_1, t_2, \ldots, t_m \rangle$ be a subgroup of $\SC$ with $D^m$ elements where $1 \leqslant m < n$. Then there exists a set of $n-m$ elements, $\{t_{m+1}, \ldots, t_n\} \subset \SC$ such that $\SC = \langle t_1, t_2, \ldots, t_m,t_{m+1}, \ldots, t_n  \rangle$. 
\end{proposition}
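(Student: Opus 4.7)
The plan is to recognize that for prime $D$, a stabilizer group of the form given is in fact a vector space over the field $\ZZ_D$, and then invoke the standard linear-algebra fact that a basis of a subspace can always be extended to a basis of the ambient space.

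More precisely, the first step is to observe that every $s\in\SC$ satisfies $s^D=I$ (this is already stated in the excerpt just before the proposition), so $\SC$ is an abelian group in which every non-identity element has order exactly $D$, since $D$ is prime. Hence $\SC$ is an elementary abelian $D$-group and may be regarded as a $\ZZ_D$-vector space under the operations $s_1+s_2:=s_1s_2$ and $\alpha\cdot s:=s^\alpha$ for $\alpha\in\ZZ_D$. In this language, $|\SC|=D^n$ means $\dim_{\ZZ_D}\SC=n$, the hypothesis that the generators $t_1,\dots,t_m$ are linearly independent in $\TC$ (i.e., $\TC=\langle t_1,\dots,t_m\rangle$ has $D^m$ elements) means exactly that $\{t_1,\dots,t_m\}$ is a $\ZZ_D$-basis of the $m$-dimensional subspace $\TC\subset\SC$.

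The second step is the inductive extension. If $m<n$, then $\TC\subsetneq\SC$, so pick any $t_{m+1}\in\SC\setminus\TC$. In the vector-space picture $t_{m+1}$ is not a $\ZZ_D$-linear combination of $t_1,\dots,t_m$, so $\{t_1,\dots,t_{m+1}\}$ is linearly independent, and $\langle t_1,\dots,t_{m+1}\rangle$ has $D^{m+1}$ elements. Repeating this process $n-m$ times produces the desired generators $t_{m+1},\dots,t_n\in\SC$ with $\SC=\langle t_1,\dots,t_n\rangle$.

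I do not foresee a genuine obstacle: the essential content is just the fact that $\ZZ_D$ is a field when $D$ is prime, so that linear-algebra notions (linear independence, dimension, basis extension) apply to $\SC$. The only point requiring a small amount of care is justifying the vector-space structure itself, namely verifying that raising a Pauli product to a power in $\ZZ_D$ is well defined on $\SC$; this is guaranteed by $s^D=I$ together with abelianness. The proposition then reduces to a one-line appeal to the Steinitz exchange lemma, so the written proof should be quite short.
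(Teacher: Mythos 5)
Your proposal is correct and follows essentially the same route as the paper's proof: the paper likewise picks an element $t_{m+1}\in\SC\setminus\TC$, uses primality of $D$ to conclude its order is $D$ so that $\langle t_1,\ldots,t_{m+1}\rangle$ has $D^{m+1}$ elements, and iterates until $\SC$ is generated. Your recasting in $\ZZ_D$-vector-space language is just a (perfectly valid) reformulation of that same incremental argument, so no substantive difference exists.
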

\begin{proof}
First pick an element of $\SC$ not in $\TC$ and call it $t_{m+1}$. Since $D$ is prime, the order of $t_{m+1}$ must be $D$. Then the set $\{ t'\: t_{m+1}^{\alpha} | \; t' \in \langle t_1, t_2, \ldots, t_m \rangle, \alpha \in \ZZ_D \},  \equiv \langle t_1, t_2, \ldots, t_m, t_{m+1} \rangle $ is a subgroup of $\SC$ with $D^{m+1}$ elements. Repeat this incremental addition of generators until the set of generators generates $\SC$.
\end{proof}

\xa
\outl{Bipartitions of stabilizer state.}
\xb

A stabilizer state $\ket{\SC} \in \HC^{\otimes n}$ naturally ``lives'' in a tensor product space of $n$ qudits but one can imagine a coarser-grained partitioning where the $n$ qudits are divided into two parts, labeled $A$ and $B$, which we will call a bipartition. One can regard any state on the total Hilbert space as an entangled state on $\HC_A \otimes \HC_B$ and we will also refer to such a state as a bipartition. Tripartitions and generalizations to higher number of partitions can be analogously defined. (Obviously partitions can be defined on any multi-partite state, not just stabilizer states.)

\xa
\outl{Reduced density operators of stabilizer state.}
\xb

A useful expression for reduced density operators of multipartite
stabilizer states is the following. For a bipartite stabilizer state
$\ket{\SC} \in \HC_A \otimes \HC_B$ let
\begin{align}
\label{eqn14}
\SC_A := \left\{ s \in \SC : \Tr_B\{s\} \neq 0 \right\}
\end{align}
be the elements of $\SC$ equal to the identity on $\HC_B$.  They form a
subgroup of $\SC$ (e.g., \cite{PhysRevA.81.032326}), and in light of 
\Eq{eqn11},
\begin{align}
\label{eqn15}
\rho_A = \Tr_B\{\ket{\SC}\bra{\SC}\} = \frac{1}{D^{n_A}} \sum_{s \in \SC_A} s.
\end{align}
If we square both sides we see that $\rho_A^2 = (|\SC_A|/D^{n_A}) \rho_A$,
which means that the reduced density operator of a stabilizer state has
identical positive eigenvalues, so it is proportional to a
projector. Additionally it satisfies
\begin{align}
\label{eqn16}
\text{rank} (\rho_A) = \frac{D^{n_A}}{|\SC_A|}.
\end{align}
Therefore, $\rho_A$ is proportional to the identity if and only if the subgroup $\SC_A$ has only the identity element.

\xa
\outl{Tensor product of stabilizer state.}
\xb

Finally let $\ket{\SC}$ and $\ket{\TC}$ be stabilizer states on distinct sets
of $n$ and $m$ qudits with stabilizer group $\SC = \langle s_1, \ldots, s_n
\rangle$ and $\TC = \langle t_1, \ldots, t_m \rangle$. Then clearly the tensor
product of these states $\ket{\VC} = \ket{\SC} \otimes \ket{\TC}$ is also a
stabilizer state with the stabilizer group
\begin{align}
\label{eqn17}
\VC &= \langle s_1, \ldots, s_n \rangle \otimes \langle t_1,
\ldots, t_m \rangle \nonumber \\ &= \langle s_1 \otimes I , \ldots, s_n
\otimes I, I \otimes t_1, \ldots, I \otimes t_m \rangle.
\end{align}
Conversely, given a stabilizer state $\ket{\VC} \in \HC^{\otimes
n} \otimes \HC^{\otimes m}$, if the stabilizer group can be written as a
tensor product of two stabilizer groups, $\VC = \langle s_1, \ldots, s_n
\rangle \otimes \langle t_1, \ldots, t_m \rangle$, then $\ket{\VC} = \ket{\SC}
\otimes \ket{\TC}$, since $\ket{\SC}$ and $\ket{\TC}$ are uniquely determined
by their respective stabilizer groups.

\subsection{Decomposition of Stabilizer States of Composite Dimensions}
\label{subsection2d}

Let the integer $D$ have the prime decomposition
\begin{align}
\label{eqn18}
D = p_1^{\epsilon_1} p_2^{\epsilon_2} \cdots p_m^{\epsilon_m},
\end{align}
where the $p_i$ are distinct primes and the $\epsilon_i$ positive integers. The
following theorem is useful when the qudit dimension $D$ is composite.

\begin{theorem}[Chinese Remainder Decomposition of Stabilizer State]
\label{theorem4}

Let $\AC$ be an abelian group of linearly independent Pauli products on
$n$ qudits, each of dimension $D$, and let \Eq{eqn18} be the prime
decomposition of $D$. 

 Then $\AC$ is unitarily equivalent to a tensor
product of $m$ abelian groups in the sense that
\begin{align}
\label{eqn19}
(\UC \otimes \cdots \otimes \UC) \AC (\UC \otimes \cdots \otimes \UC)^\dag =
\bigotimes_{i=1}^m \AC_{i},
\end{align} 
where $\UC$ is a unitary acting on the $D$-dimensional space of a single
qudit,
each $\AC_{i}$  is an abelian group of linearly
independent Pauli products on $n$ qudits of dimension $p_i^{\epsilon_i}$,
and
\begin{align}
\label{eqn20}
| \AC |  = | \AC_{i} |\cdot| \AC_{2} | \cdots | \AC_{m} | .
\end{align}
\end{theorem}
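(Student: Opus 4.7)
The plan is to realize the theorem as a direct consequence of the Chinese Remainder Theorem (CRT) applied to the single-qudit Hilbert space. First I would define a single-qudit unitary $\UC$ by $\UC \ket{j} = \ket{j \bmod p_1^{\epsilon_1}} \otimes \cdots \otimes \ket{j \bmod p_m^{\epsilon_m}}$, which is well-defined and unitary because CRT says $j \mapsto (j \bmod p_1^{\epsilon_1}, \ldots, j \bmod p_m^{\epsilon_m})$ is a bijection $\ZZ_D \to \prod_i \ZZ_{p_i^{\epsilon_i}}$. This identifies $\HC$ with $\bigotimes_i \HC^{(i)}$, where $\HC^{(i)}$ is a single-qudit space of dimension $p_i^{\epsilon_i}$. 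I would then compute $\UC X \UC^\dag$ and $\UC Z \UC^\dag$: since shifting $j \to j+1$ in $\ZZ_D$ corresponds under CRT to simultaneous shifts by $1$ in every $\ZZ_{p_i^{\epsilon_i}}$, one gets $\UC X \UC^\dag = \bigotimes_i X^{(i)}$. Writing the inverse CRT map as $j = \sum_i e_i D_i j_i \bmod D$ with $D_i = D/p_i^{\epsilon_i}$ and $e_i D_i \equiv 1 \pmod{p_i^{\epsilon_i}}$, one obtains $\omega^j = \prod_i \omega_i^{e_i j_i}$ with $\omega_i = \ee^{2\pi \ii / p_i^{\epsilon_i}}$, and hence $\UC Z \UC^\dag = \bigotimes_i (Z^{(i)})^{e_i}$; because each $e_i$ is a unit in $\ZZ_{p_i^{\epsilon_i}}$, these layer factors are genuine Pauli operators.

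Next I would apply $\UC^{\otimes n}$ and regroup tensor factors so that the $n$ layer-$i$ qudits are gathered into one block for each $i$. Under this, every Pauli product of the form \Eq{eqn3} conjugates into a tensor product over $i$ of Pauli products on $n$ qudits of dimension $p_i^{\epsilon_i}$, whose exponents on qudit $k$ of layer $i$ are $(x_k \bmod p_i^{\epsilon_i},\, e_i z_k \bmod p_i^{\epsilon_i})$; the scalar prefactor $\lambda^\gamma$ can be absorbed into any one of the layers. Define $\AC_i$ to be the projection of $\UC^{\otimes n} \AC (\UC^{\otimes n})^\dag$ onto the layer-$i$ block. Since commutation of two tensor products of Paulis reduces to commutation on each layer, each $\AC_i$ is automatically abelian.

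The principal work is to verify $\UC^{\otimes n} \AC (\UC^{\otimes n})^\dag = \bigotimes_i \AC_i$ as groups (not merely a subgroup inclusion). For this I would invoke CRT a second time: given any $a \in \AC$ with layer decomposition $(a_1,\ldots,a_m)$, choose $k_i \in \ZZ_D$ with $k_i \equiv 1 \pmod{p_i^{\epsilon_i}}$ and $k_i \equiv 0 \pmod{p_j^{\epsilon_j}}$ for $j \ne i$. Then $a^{k_i} \in \AC$ has layer decomposition $(I,\ldots,I,a_i,I,\ldots,I)$ up to a scalar, showing that every single-layer factor of any element of $\AC$ is itself accessible inside $\AC$. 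Combining such isolated factors over all layers then produces every element of $\bigotimes_i \AC_i$, giving both the equality of groups and the cardinality identity \Eq{eqn20}. Linear independence of each $\AC_i$ is inherited: any nontrivial relation among its elements would, via these isolated factors, lift to a nontrivial relation among elements of $\AC$, contradicting the hypothesis.

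The main obstacle I anticipate is the careful tracking of the scalar prefactors $\lambda^\gamma$ through the CRT decomposition, since $\lambda = \ee^{2\pi \ii/(2D)}$ carries an extra factor of $2$ that does not respect the prime decomposition of $D$, especially when $D$ is even. Nevertheless, because the theorem asserts only a tensor-product structure of the abelian groups (whose elements are individual Pauli products) and because linear independence is insensitive to rescaling individual group elements, this phase bookkeeping should amount to a matter of convention rather than a substantive difficulty.
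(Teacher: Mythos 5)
Your proposal is correct and follows essentially the same route as the paper's Appendix A: the same CRT-induced single-qudit unitary with $\UC X \UC^\dag = \bigotimes_i X^{(i)}$ and $\UC Z \UC^\dag = \bigotimes_i (Z^{(i)})^{e_i}$ (your $e_i$ are the paper's $r_i$), and the same layer-isolation trick via powers $a^{k_i}$ with $k_i \equiv \delta_{ij} \bmod p_j^{\epsilon_j}$, which the paper implements generator-by-generator as $\left( \mathbb{U} g^{(i)} \mathbb{U}^\dag \right)^{\mu_i \delta_{\bar i}}$ and reduces to two factors plus induction rather than treating all $m$ layers at once. The phase issue you flag is resolved exactly as you suspect: linear independence of $\AC$ forces $g^{\mathrm{order}(g)} = I$ with no residual phase, so any scalar arising in an isolated factor can simply be absorbed into the nontrivial layer, as the paper does implicitly in Eq.~\eqref{eqnA10}.
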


The proof is in Appendix A. As stabilizer groups are examples of such abelian
groups, one has:

\begin{corollary}
\label{corollary5}
Let $\ket{\SC}$ be a stabilizer state on $n$ qudits of dimension $D$, prime
decomposition given by \Eq{eqn18}.

Then there exists a single-qudit unitary $\UC$ such that
\begin{align}
\label{eqn21}
\UC \otimes \cdots \otimes \UC \ket{\SC} = \bigotimes_{i=1}^M \ket{\SC_{i}}
\end{align}
where each $\ket{\SC_{i}}$ is a stabilizer state on $n$ qudits of dimension $p_i^{\epsilon_i}$.
\end{corollary}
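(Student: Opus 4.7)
The plan is to apply Theorem \ref{theorem4} directly to the stabilizer group $\SC$ of $\ket{\SC}$. Since $\SC$ is by definition an abelian subgroup of $\PC_n$ consisting of linearly independent Pauli products, the hypotheses of Theorem \ref{theorem4} are satisfied, and there exists a single-qudit unitary $\UC$ such that $(\UC \otimes \cdots \otimes \UC) \SC (\UC \otimes \cdots \otimes \UC)^\dag = \bigotimes_{i=1}^m \SC_{i}$, where each $\SC_{i}$ is an abelian group of linearly independent Pauli products on $n$ qudits of dimension $p_i^{\epsilon_i}$.

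Next I would verify that each $\SC_{i}$ is in fact a \emph{stabilizer} group in the sense of Section~\ref{subsection2c}, i.e.\ that $|\SC_{i}| = (p_i^{\epsilon_i})^n$, so that it uniquely determines a state. This follows from a short counting argument: since $\ket{\SC}$ is a stabilizer state, $|\SC| = D^n$; \Eq{eqn20} gives $|\SC| = \prod_i |\SC_{i}|$; and Proposition~\ref{proposition1} applied to each prime-power factor yields $|\SC_{i}| \leqslant (p_i^{\epsilon_i})^n$. Combining these with $D^n = \prod_i (p_i^{\epsilon_i})^n$ forces equality $|\SC_{i}| = (p_i^{\epsilon_i})^n$ for every $i$. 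Therefore each $\SC_{i}$ is a stabilizer group that uniquely stabilizes some state $\ket{\SC_{i}}$ on $n$ qudits of dimension $p_i^{\epsilon_i}$.

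To finish, I would invoke the converse observation recorded at the end of Section~\ref{subsection2c} (around \Eq{eqn17}): if a stabilizer group on a tensor product Hilbert space factors as a tensor product of stabilizer groups on the factor spaces, then its unique stabilized state factors correspondingly. Applied to the transformed group $(\UC \otimes \cdots \otimes \UC)\SC(\UC \otimes \cdots \otimes \UC)^\dag = \bigotimes_i \SC_{i}$, this forces $(\UC \otimes \cdots \otimes \UC)\ket{\SC} = \bigotimes_i \ket{\SC_{i}}$, which is precisely \Eq{eqn21}.

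The main obstacle has already been handled in Theorem~\ref{theorem4}, whose proof (deferred to Appendix~A) carries the bulk of the work: identifying the single-qudit Chinese-remainder unitary $\UC$ that reinterprets each $D$-dimensional qudit as a tensor of prime-power qudits and splits the Pauli products accordingly. Once that group-level statement is in hand, the passage to the state-level statement is essentially bookkeeping, the only nontrivial check being the maximality of each $|\SC_{i}|$, which as shown above is immediate from Proposition~\ref{proposition1} and \Eq{eqn20}.
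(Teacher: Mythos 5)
Your proof is correct and takes essentially the same route as the paper's: apply Theorem~\ref{theorem4} to the stabilizer group $\SC$ and pass back to the state via the converse observation following \Eq{eqn17}. Your counting argument establishing $|\SC_{i}| = (p_i^{\epsilon_i})^n$ via \Eq{eqn20} and Proposition~\ref{proposition1} is a detail the paper's two-sentence proof leaves implicit, and you have supplied it correctly.
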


\begin{proof}
Let $\SC$ denote the stabilizer group of $\ket{\SC}$ which has $D^n$ linearly independent Pauli products. Then by the theorem above $\SC$, an abelian group of linearly independent Pauli products, is equivalent up to local unitaries to a tensor product of $m$ stabilizer groups of dimensions $p_1^{\epsilon_1}, p_2^{\epsilon_2}, \ldots, p_m^{\epsilon_m}$, each stabilizing its own stabilizer state $\ket{\SC_{i}}$.
\end{proof}

When applied to an arbitrary stabilizer state $\ket{\SC_6}$ on qudits of
$D=6$, this corollary states that it is equivalent up to local single-qudit
unitaries to a tensor product of two stabilizer states $\ket{\SC_2} \otimes
\ket{\SC_3}$, one on $n$ qubits and the other on $n$ qutrits. Essentially each
$D=6$ qudit has an internal tensor product structure that can be decomposed to
a qubit and a qutrit. Therefore in studies of entanglement of stabilizer
states of $D=6$, it is sufficient to just consider qubit and
qutrit stabilizer states.

In this paper, the corollary above is used to extend various results on
stabilizer states that hold for prime D to the case where $D$ is
\emph{squarefree}, meaning that $\epsilon_i=1$ in \eqref{eqn18} for every
$i$. 

\subsection{Graph States}
\label{subsection2e}

\xa
\outl{Definition of qudit graph state.}
\xb

Let $\Gamma_{ij} = \Gamma_{ji}$ be the adjacency matrix of an undirected graph
$G$ on $n$ vertices with no loops ($\Gamma_{ii}=0)$). Each $\Gamma_{jj}$, 
the \emph{weight} of the edge connecting vertices $i$ and $j$, can take any
value in  $\ZZ_D$, with (as usual) $\Gamma_{ij}=0$ in the absence of an edge.

The graph state $\ket{G}$ is a state on $n$ qudits of dimension $D$ defined as
\begin{align}
\label{eqn51}
\ket{G} := \left( \prod_{i=1}^{n-1} \prod_{j=i+1}^{n} \CP_{ij}^{\Gamma_{ij}} \right)  \ket{+}_1 \otimes \cdots \otimes \ket{+}_n 
\end{align}
where
\begin{align}
\label{eqn52}
\ket{+} := \frac{1}{\sqrt{D}} \sum_{j=0}^{D-1} \ket{j}
\end{align}
and the two-qudit gate $\CP_{ij}$ is defined in Eq.~\eqref{eqn8}. Note that
the $\CP$ gates all commute with each other so there is no need to specify the
order in which they act on the kets. For non-prime $D$, there are alternative
ways to define graph states; see, e.g., \cite{quantph.0703112}.

\xa
\outl{Stabilizer property of graph state. Generators of stabilizer group.}
\xb

All graph states are stabilizer states but the converse is not true; however it was shown in \cite{quantph.0111080} and \cite{quantph.0703112} for prime $D$ (and therefore also for squarefree $D$ by Corollary \ref{corollary5}) that all stabilizer states are equivalent up to local Clifford unitaries to graph states. It is often more convenient to work with the stabilizer group (denoted by $\SC_G$) rather than the ket itself. For a given graph state $\ket{G}$ with adjacency matrix $\Gamma$, there is a canonical set of $n$ stabilizer generators, $\{g_i\}$ given by
\begin{align}
\label{eqn53}
g_i := X_i \left( \prod_{j=1}^{n} Z_j^{-\Gamma_{ij}} \right) \qquad \text{for }i=1,2,\ldots, n
\end{align}
which of course satisfies $g_i \ket{G} = \ket{G}$ for all $g_i$, so we have $\SC_G =\langle g_1, g_2, \ldots, g_n \rangle$. These operators are called correlation operators in \cite{quantph.0602096}.


\section{Bipartition of qudit stabilizer states}
\label{section3}

\xa
\outl{State known results on bipartitions of graph states.}
\xb

Entanglement across bipartitions of stabilizer states has been studied in
Section 3 of \cite{PhysRevA.69.062311} and \cite{quantph.0406168}. Here we
shall extend their result to all squarefree $D\geqslant 2$ with the theorem
below. The entanglement of a bipartite state can always be studied in terms of its Schmidt decomposition but we present an alternative approach here because it is helpful in explaining the techniques that will be used in the Tripartition Theorem in Section \ref{section4}.

\xa
\outl{Introduce concept of unentangled subsystems.}
\xb

Before stating the Bipartition Theorem let us study some simple stabilizer states to understand how \emph{unentangled subsystems} in each part can obscure the actual amount of entanglement present. We shall consider the two stabilizer states on three qudits below and ask how much entanglement is present across the $A$-$B$ bipartition:
\begin{align}
\label{eqn100}
\SC^{(1)} &= \langle Z_{A_1}^{} Z_{B_1}^{-1}, X_{A_1} X_{B_1}, X_{A_2}\rangle \nonumber \\
\SC^{(2)} &= \langle Z_{A_1}^{} Z_{B_1}^{-1}, X_{A_1}^{} Z_{A_2}^{-1} X_{B_1}^{}, Z_{A_1}^{-1} X_{A_2}^{}\rangle
\end{align}
where the subscripts $A_1, A_2$ denote qudits in part $A$ and analogously $B_1$ in part $B$.

First observe that $\SC^{(1)}$ can be factorized as $\langle Z_{A_1}^{} Z_{B_1}^{-1}, X_{A_1} X_{B_1}\rangle \otimes \langle X_{A_2}\rangle$ (see discussion at the end of Subsection \ref{subsection2c}) and the unentangled qudit $A_2$ is irrelevant as far as entanglement between $A$ and $B$ is concerned. From here it is straightforward to see that $\langle Z_{A_1} Z_{B_1}, X_{A_1} X_{B_1}\rangle$ stabilizes the EPR pair described in Eq.~\eqref{eqn12}.

In the second case, it is harder to tell how entangled the state is by just looking at the stabilizer group $\SC^{(2)}$, even though it differs only by a local unitary on part $A$ from the previous state, $\ket{\SC^{(2)}} = \CP_{A_1 A_2} \ket{\SC^{(1)}}$. This tells us there must be some hidden unentangled subsystem in part $A$ that upon removal will result in a simpler two-qudit stabilizer state, just like the first example.

\xa
\outl{Link concept of unentangled subsystems back to reduced density operators of stabilizer state. Lead into introduction of theorem.}
\xb

A systematic way to ``detect'' the presence of unentangled subsystems in stabilizer states is by inspecting the reduced density operator on each part or equivalently the subgroups $\SC_A, \SC_B$ (see Subsection \ref{subsection2c}). The Bipartition Theorem below is essentially just a formal statement that once all the unentangled subsystems are removed, all that remains is a collection of EPR pairs.

\begin{theorem}[Bipartition of stabilizer state]
\label{theorem6}
For squarefree $D$, let $\ket{\SC}$ be a stabilizer state on $n\geqslant 2$ qudits. For any bipartition of $\ket{\SC} \in \HC_A \otimes \HC_B$, there exists Clifford unitaries $U_A, U_B$ on each part such that $U_A U_B \ket{\SC}$ is a collection of maximally entangled EPR pairs and unentangled single-qudit states, i.e.
\begin{align}
\label{eqn101}
U_A U_B \ket{\SC}_{AB} = \ket{\mathrm{EPR}}_{AB}^{\otimes m_{AB}} \otimes \ket{+}_A^{\otimes m_A} \otimes \ket{+}_B^{\otimes m_B}.
\end{align}
Note that $m_A$, $m_B$ or $m_{AB}$ can be zero.
\end{theorem}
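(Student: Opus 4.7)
The plan is three-stage: reduce to prime $D$ via Corollary~\ref{corollary5}, strip off the unentangled $\ket{+}$ qudits on each side using the subgroups $\SC_A$ and $\SC_B$, and then deploy the symplectic structure of the residual stabilizer group to recognize what is left as a tensor product of EPR pairs. The single-qudit unitary $\UC$ of Corollary~\ref{corollary5} acts on each qudit individually and therefore commutes with the bipartition, so it suffices to treat prime $D$ and reassemble factors at the end.

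For prime $D$, I would pick generators $s_1, \ldots, s_{k_A}$ of $\SC_A$ from \Eq{eqn14} and iteratively apply Proposition~\ref{proposition2} on part $A$ to convert them into $X_{A_1}, \ldots, X_{A_{k_A}}$ on distinct qudits: once one generator has been brought to $X_{A_1}$, the other generators of $\SC_A$ must commute with it and so carry no $Z$ on qudit $A_1$, while any residual $X_{A_1}$ component can be cancelled by multiplication by $X_{A_1}$ itself. The same procedure on $B$ gives $\SC_B = \langle X_{B_1}, \ldots, X_{B_{k_B}}\rangle$. Using Proposition~\ref{proposition3} to complete to a full generating set of $\SC$, and clearing any leftover $X_{A_i}$ or $X_{B_j}$ components off the additional generators in the same way, the singled-out qudits sit in $\ket{+}$ states and account for the last two tensor factors of \Eq{eqn101}.

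What remains is a stabilizer state $\ket{\psi}$ on $A' \cup B'$ with both $\SC_{A'}$ and $\SC_{B'}$ trivial, so \Eq{eqn16} makes both reduced density matrices maximally mixed and forces $|A'|=|B'|=:m$. The remaining $2m$ generators $u_j = a_j \otimes b_j$ have restrictions $\{a_j\}$ to $A'$ that are linearly independent over $\ZZ_D$ (any product whose $A'$-restriction equals the identity would lie in $\SC_{B'}=\{I\}$), so $\{a_j\}$ forms a basis of the $2m$-dimensional symplectic $\ZZ_D$-module of Paulis on $A'$ modulo phases. I would then combine an invertible $\ZZ_D$-linear relabelling of the generators (which acts simultaneously on the $a_j$'s and the $b_j$'s) with a Clifford on $A'$ (a symplectic transformation of the $A'$-module) to bring $\{a_j\}$ to the standard symplectic basis $X_{A'_1}, Z_{A'_1}, \ldots, X_{A'_m}, Z_{A'_m}$. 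The identities $[u_i, u_j]=0$ then force the symplectic pairings among the $b_j$'s to be the negatives of those among the $a_j$'s, so $\{b_j\}$ has pairing matrix $-J$, identical to that of $\{X_{B'_1}, Z_{B'_1}^{-1}, \ldots, X_{B'_m}, Z_{B'_m}^{-1}\}$; a further Clifford on $B'$ realizing the (automatically symplectic) change of basis between these two collections completes the reduction, leaving the stabilizer group $\langle X_{A'_k}X_{B'_k}, Z_{A'_k}Z_{B'_k}^{-1}\rangle_{k=1}^m$ of $m$ EPR pairs from \Eq{eqn12}.

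The main obstacle is the symplectic Gauss-elimination step: one needs, over the field $\ZZ_D$ for prime $D$, a Witt-type statement that a non-degenerate anti-symmetric bilinear form on $\ZZ_D^{2m}$ admits a symplectic basis, together with the fact that any symplectic transformation of $\ZZ_D^{2m}$ can be realized by a Clifford unitary built from $F$, $S^{(\alpha)}$, $W$, and CNOT of Subsection~\ref{subsection2b}. There is also a phase-bookkeeping issue, since Clifford conjugations produce Pauli products with phases in $\ZZ_{2D}$; the ``$p^D=I \Rightarrow UpU^\dag=X_1$ exactly'' clause of Proposition~\ref{proposition2} is what lets one conclude that the final generators really are $X_{A'_k}X_{B'_k}$ and $Z_{A'_k}Z_{B'_k}^{-1}$ and not merely proportional to them.
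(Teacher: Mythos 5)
Your proposal is correct, and its first half coincides with the paper's own proof: the paper likewise reduces to prime $D$ via Corollary~\ref{corollary5} and strips off the $\ket{+}$ qudits by exactly your mechanism---Proposition~\ref{proposition2} to bring an element of $\SC_A$ to $X_{A_1}$, Proposition~\ref{proposition3} to complete a generating set, and multiplication by powers of $X_{A_1}$ to clear the other generators (the paper does this recursively one qudit at a time rather than batching all of $\SC_A$ and $\SC_B$, an immaterial difference). Where you genuinely diverge is the endgame. Once $\SC_A$ and $\SC_B$ are trivial, the paper simply notes that $\rho_A$ and $\rho_B$ are proportional to the identity, so the state is maximally entangled ``and therefore is equivalent to a collection of EPR pairs''---implicitly a Schmidt-decomposition argument, which as stated only yields \emph{arbitrary} local unitaries. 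Your symplectic normal form is more work but more honest about the theorem's claim that $U_A,U_B$ can be taken Clifford: the $A'$-restrictions $\{a_j\}$ do form a basis of $\ZZ_D^{2m}$, their Gram matrix is $P^T J P$ with $P$ invertible and hence automatically non-degenerate (so your Witt-type step never encounters a degenerate form), and commutation of the $u_j$ indeed forces the $B'$-side pairing matrix to be $-J$, matching $\{X_{B'_k}, Z_{B'_k}^{-1}\}$. The price is the two lemmas you flag---existence of a symplectic basis over $\ZZ_D$ for prime $D$, and realizability of every symplectic transformation by a Clifford built from the paper's gate set---which the paper never states but which are standard and available in its reference \cite{PhysRevA.71.042315}. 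One small repair to your phase bookkeeping: the exact clause of Proposition~\ref{proposition2} targets a single-qudit $X_1$ and does not directly apply to the two-sided products; instead, since every stabilizer element satisfies $s^D=I$, the residual phases on the final generators are powers of $\omega$ and are removed by conjugation with $Z_{A'_k}$ (for the $X_{A'_k}X_{B'_k}$ generator) and $X_{A'_k}$ (for the $Z_{A'_k}Z_{B'_k}^{-1}$ generator), each of which commutes with the partner generator of its pair---the same trick used at the end of the proposition's proof.
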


\begin{proof}
By invoking Corollary \ref{corollary5}, we can decompose $\ket{\SC}$ into
several stabilizer states where each of them is on $n$ qudits of prime
dimension. Therefore it is sufficient to prove the theorem only for prime
$D$.

If the subgroups $\SC_A$ and $\SC_B$ both contain only the identity element,
then by Eq.~\eqref{eqn15} both $\rho_A$ and $\rho_B$ are proportional to the
identity. This is equivalent to $\SC$ not containing any element that is
non-trivial only in one part, such as $X_{A_1} I_B$. This also means
$\ket{\SC}$ is maximally entangled and therefore is equivalent to a collection
of EPR pairs. 

Otherwise assume that $\SC_A$ has at least one element, $s \in \PC_n$ not
equal to the identiy, a Pauli product which acts non-trivially on at least one
qudit in part $A$. Without loss of generality we can assume that qudit is
$A_1$. Then by Proposition \ref{proposition2} we know there exists a Clifford
operation, $U_A$ such that $U^{}_A s U^{\dag}_A = X_{A_1} I_{A_2} \cdots
I_{A_{n_A}}$.

Next consider the new stabilizer group for $U_A \ket{\SC}$ and choose
$X_{A_1}$ as one of the generators so that $U^{}_A \SC U_A^\dag = \langle
X_{A_1}, s_2, \ldots, s_n \rangle$, which is always possible as shown in
Proposition \ref{proposition3}. Since $s_2$ must commute with $X_{A_1}$, there
cannot be any $Z_{A_1}$ operator in it and hence must be of the form $s_2 =
X_{A_1}^{\alpha} \otimes p_{A \backslash A_1}$, where $p_{A \backslash A_1}$
is some Pauli product on qudits $A_2, \ldots, A_{n_A}$. If $\alpha=0$, then do
nothing. Otherwise replace $s_2$ with $s'_2 := X_{A_1}^{-\alpha} s_2 = I_{A_1}
\otimes p_{A \backslash A_1}$, so that the new generator is the identity on
qudit $A_1$. This replacement does not change the group being
generated.

Repeat this procedure for all the other generators $s_3, \ldots, s_n$. In doing so we now have a new set of generators such that there is only one generator that is non-trivial on qudit $A_1$ while all other generators have identity on $A_1$. The end result is a stabilizer group that can be written as a tensor product, $U_A \SC U_A^\dag = \langle X_{A_1} \rangle \otimes \langle s'_2, \ldots, s'_n \rangle$.

Following the discussion at the end of Subsection \ref{subsection2c}, we can write $U_A \ket{\SC} = \ket{+}_{A_1} \otimes \ket{\SC'}$ where $\ket{+}$ is defined in Eq.~\eqref{eqn52} and $\ket{\SC'}$ is stabilized by $\langle s_2', \ldots, s_n'\rangle$. In other words, we have extracted an unentangled subsystem from part $A$ and are left with a stabilizer state with $n-1$ qudits. Repeat this process on both parts until both $\SC_A$ and $\SC_B$ contain only the identity element. This concludes the proof.
\end{proof}

\xa
\outl{Allude to reuse of extraction result in tripartition theorem.}
\xb

In fact this extraction of unentangled subsystems works for stabilizer states with \emph{any} number of parts since we can always view the part of interest as $A$ and all the other parts as $B$.


\section{Tripartition of qudit stabilizer states}
\label{section4}

\xa
\outl{Mention known results on tripartition of qubit graph states.}
\xb

The problem of tripartition of qubit ($D=2$) stabilizer states has been studied by Bravyi et al in \cite{quantph.0504208}. They proved that such states are always equivalent up to unitaries on each part to a collection of GHZ states, maximally entangled EPR pairs and unentangled single-qubit states; see Figure \ref{tripart_graph} for a simple illustration. In the same paper, they also provided partial solutions to the general problem with more than three parties. Here we extend their tripartition result to  squarefree $D \geqslant 2$ using a method mentioned but not used in their paper.

\begin{figure}
\begin{center}
\includegraphics{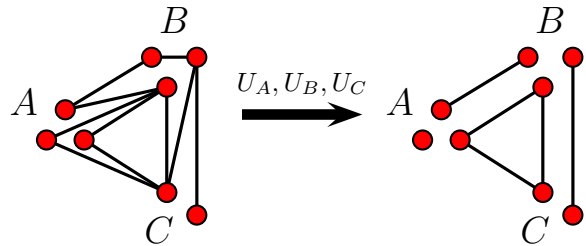}
\caption{Example of equivalence of a tripartite graph state to a collection of single-qudit states, EPR pairs and GHZ state.}
\label{tripart_graph}
\end{center}
\end{figure}

\begin{theorem}[Tripartition of stabilizer state]
\label{theorem7}
For squarefree $D$, let $\ket{\SC}$ be a stabilizer state with $n \geq 3$ qudits. For any tripartition of $\ket{\SC} \in \HC_A \otimes \HC_B \otimes \HC_C$, there exists Clifford unitaries $U_A, U_B, U_C$ on each part such that $U_A U_B U_C \ket{\SC}$ is a collection of GHZ states, maximally entangled EPR pairs and unentangled single-qudit states, i.e.
\begin{align}
\label{eqn200}
U_A U_B U_C \ket{\SC} &= \ket{\mathrm{GHZ}}^{\otimes m_{ABC}} \otimes \ket{\mathrm{EPR}}^{\otimes m_{AB}} \nonumber \\
& \quad \otimes \ket{\mathrm{EPR}}^{\otimes m_{BC}}\otimes \ket{\mathrm{EPR}}^{\otimes m_{AC}} \nonumber \\
& \quad \otimes \ket{+} ^{\otimes m_A} \otimes \ket{+} ^{\otimes m_B}\otimes \ket{+} ^{\otimes m_C}
\end{align}
where $m_A, m_B, m_C, m_{AB}, m_{BC}, m_{AC}, m_{ABC}$ are non-negative integers that can be zero. 
\end{theorem}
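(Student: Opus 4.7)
The aim is to mimic the extraction strategy of Theorem~\ref{theorem6} but allow each peeled-off factor to be a GHZ state, an EPR pair, or an unentangled $\ket{+}$, depending on the support of the stabilizer generator being eliminated. My first move would be to reduce to prime $D$ via Corollary~\ref{corollary5}: under a product of single-qudit unitaries the stabilizer state factorizes into stabilizer states on qudits of prime dimensions $p_i$, and because local Cliffords on a squarefree-$D$ qudit factor through the Chinese-remainder decomposition, proving the tripartition for each prime-dimensional factor and tensoring the results gives the squarefree-$D$ statement. So assume $D$ is prime.

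Next I would extract all the $\ket{+}$ factors. Running the single-qudit extraction step of Theorem~\ref{theorem6} three times---once with $A$ played against $B\cup C$, then with $B$ against $A\cup C$, then with $C$ against $A\cup B$---peels off $\ket{+}_A^{\otimes m_A}\otimes\ket{+}_B^{\otimes m_B}\otimes\ket{+}_C^{\otimes m_C}$ using only local Cliffords on each part, leaving a residual stabilizer state in which $\SC_A=\SC_B=\SC_C=\{I\}$. With those subgroups trivial, every non-identity generator $s$ of what remains has support on at least two of the three parts, and Proposition~\ref{proposition2} applied on each active part lets me reduce $s$ to one of two canonical forms: $X_{A_1}X_{B_1}$ (the EPR branch, with the obvious variants for $A$--$C$ and $B$--$C$) or $X_{A_1}X_{B_1}X_{C_1}$ (the GHZ branch).

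In the EPR branch I would search for a companion stabilizer element $t$ whose restriction to $\{A_1,B_1\}$ is $Z_{A_1}Z_{B_1}^{-1}$, clearing any stray $C$-action of $t$ by a local Clifford on $C$ and then using local Cliffords on $A$ and $B$ fixing $A_1,B_1$ to eliminate its action on the other qudits of those parts. In the GHZ branch I would instead look for two partner generators whose triple restrictions give $Z_{A_1}Z_{B_1}^{-1}$ and $Z_{B_1}Z_{C_1}^{-1}$, so that $\langle s,t,t'\rangle$ matches the GHZ stabilizer of Eq.~\eqref{eqn13}. In either case, the factorization trick following Eq.~\eqref{eqn17}, already used in the proof of Theorem~\ref{theorem6}, peels off $\ket{\mathrm{EPR}}_{A_1B_1}$ or $\ket{\mathrm{GHZ}}_{A_1B_1C_1}$ as a tensor factor, and iterating on the smaller stabilizer state eventually leaves $\SC=\{I\}$, producing the decomposition \eqref{eqn200}.

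The technical heart of the argument is verifying that the required partner generators always exist and can be moved into the canonical $Z$-form by local Cliffords on each part. Because $\SC_A=\SC_B=\SC_C=\{I\}$ forces $\rho_{A_1},\rho_{B_1},\rho_{C_1}$ to be maximally mixed, the projection of $\SC$ onto the chosen qudits is forced to be as large as possible; a dimension count using Proposition~\ref{proposition1} together with the commutation constraints imposed by $s$ should show this projection either contains $Z_{A_1}Z_{B_1}^{-1}$ alone (EPR branch) or both $Z_{A_1}Z_{B_1}^{-1}$ and $Z_{B_1}Z_{C_1}^{-1}$ (GHZ branch), with no hybrid configuration that would stall the iteration. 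Ruling out such hybrids cleanly is where the method ``mentioned but not used'' by Bravyi et al.~\cite{quantph.0504208} enters, and I expect it to be the most delicate step of the proof.
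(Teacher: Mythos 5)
There is a genuine gap, and also a concrete error in the one step you do commit to. In your EPR branch you propose to take a companion element $t$ whose restriction to $\{A_1,B_1\}$ is $Z_{A_1}^{}Z_{B_1}^{-1}$ and to ``clear any stray $C$-action of $t$ by a local Clifford on $C$.'' That is impossible: conjugation by a unitary on $C$ preserves support, since $U_C^{} r_C^{} U_C^\dagger = I_C$ if and only if $r_C = I_C$, so an element of the form $Z_{A_1}^{}Z_{B_1}^{-1}\otimes r_C$ with $r_C\neq I_C$ can never be brought to $Z_{A_1}^{}Z_{B_1}^{-1}$ by local unitaries. Worse, your organizing principle---classify each generator by which parts it touches and launch the EPR or GHZ branch accordingly---is not an invariant of the state. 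The GHZ state itself has stabilizer elements supported on only two parts (e.g. $Z_1^{}Z_2^{-1}$ from Eq.~\eqref{eqn13}); starting your EPR branch from such a generator, you would search for a $C$-trivial partner that does not exist and cannot be manufactured. This is precisely the ``hybrid'' stall you flag at the end, and it already occurs for a single GHZ state, so it cannot be a corner case deferred to a delicate lemma: it is the whole problem.

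What is missing is the mechanism the paper actually uses. The correct criterion for extracting an $A$--$B$ EPR pair is not the support pattern of any generator but \emph{non-commutativity among restrictions}: writing $\rho_{AB}\propto \sum_i s_A^{(i)}\otimes s_B^{(i)}$ over the subgroup $\SC_{AB}$ of elements trivial on $C$, an EPR pair is extractable exactly when two of the $s_A^{(i)}$ fail to commute; the extracted pair $Z_{A_1}^{}Z_{B_1}^{-1}$, $X_{A_1}^{}X_{B_1}^{}$ is then automatically trivial on $C$ because both elements come from $\SC_{AB}$, which sidesteps your clearing problem entirely. Once all restrictions commute for all three pairs, the GHZ phase needs quantitative input that your proposed ``dimension count'' does not supply: from the operator Schmidt rank of $\ket{\SC}\bra{\SC}$ across the $C$--$AB$ cut one gets $|\SC_{AB}| = D^{\,n_A+n_B-n_C}$; linear independence of the now-commuting families $\{s_A^{(i)}\}$, $\{s_B^{(i)}\}$ (which follows from $\rho_A,\rho_B\propto I$) together with Proposition~\ref{proposition1} then forces $n_A=n_B=n_C$ and $|\SC_{AB}|=D^{n_A}$, i.e. each restricted family is a \emph{maximal} abelian set. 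Maximality is what guarantees your partner generator exists (any Pauli commuting with all of $\{s_B^{(i)}\}$, such as $Z_{B_1}$, must appear in the family up to phase, yielding $t_2 = p_A\otimes Z_{B_1}\in\SC_{AB}$), and the Schmidt-rank surjectivity on $C$ guarantees an element whose $C$-part is proportional to $X_{C_1}$, completing the GHZ triple. Since you explicitly defer exactly these existence statements (``should show,'' ``most delicate step''), the proposal is an outline whose technical heart is absent, and the branch-selection rule it does commit to is wrong.
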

\begin{proof}

  It is sufficient to prove the theorem only for prime $D$ because Corollary
  \ref{corollary5} extends the proof to squarefree $D$. The proof can be
  divided into three major steps. In Step 1 we simply repeat the unentangled
  subsystem extraction procedure explained in the previous section; the
  details will not be repeated here.  As a consequence the three reduced
  density operators $\rho_A, \rho_B$ and $\rho_C$ are all proportional to the
  identity.
  In Step 2, we extract EPR pairs using local Clifford unitaries acting on two
  parts at a time.  Thus for parts $A$ and $B$, we try to find $U_A, U_B$ such
  that $U_A U_B I_C \ket{\SC} = \ket{\mathrm{EPR}}_{AB} \otimes
  \ket{\SC'}$. The identical procedure can be used to extract EPR pairs from
  $A$ and $C$, and from $B$ and $C$, so we only need to discuss how it works
  for $A$ and $B$.  Finally in Step 3 we prove that the state remaining after
  all these extractions must be a collection of GHZ states.

We begin Step 2 with the observation that the reduced density operator
$\rho_{AB}$ for the combined parts $A$ and $B$ is given by
\begin{align}
\label{eqn201}
\rho_{AB} &= \Tr_C \{\ket{\SC} \bra{\SC}\} = D^{-(n_A+n_B)} \sum_{i=1}^{|\SC_{AB}|} s^{(i)}_{AB} \nonumber \\
&= D^{-(n_A+n_B)} \sum_{i=1}^{|\SC_{AB}|} s^{(i)}_{A} \otimes s^{(i)}_{B},
\end{align}
where $\SC_{AB} := \{s \in \SC : \Tr_C\{s\} \neq 0\}$ and $s^{(i)}_{A},
s^{(i)}_{B}$ are Pauli products on $A$, $B$ respectively. There is some phase
ambiguity in the final expression because a phase on $s^{(i)}_{A}$
can be moved to $s^{(i)}_{B}$ or vice versa, but the following proof does not
depend on how the phase is assigned.

STEP 2A - If the collection $\{s^{(i)}_{A}\}$ defined in Eq.~\eqref{eqn201}
contains two elements $s^{(j)}_{A}, s^{(k)}_{A}$ that do not commute, then at
least one EPR pair can be extracted from parts $A$ and $B$ as we now show. We
can always assume those two elements satisfy the commutation relation
\begin{align}
\label{eqn202}
s^{(k)}_{A} s^{(j)}_{A} = \omega \: s^{(j)}_{A} s^{(k)}_{A},
\end{align}
because if the phase picked up is instead some higher power of $\omega$, we
can replace $s^{(k)}_A$ with an appropriate power of $s^{(k)}_A$, since $D$ is
prime. 
Now transform $s^{(j)}_{AB}$ so that it is the identity on every qudit
\emph{except} $A_1$ and $B_1$, i.e.
\begin{align}
\label{eqn203}
s^{(j)}_{AB} = Z_{A_1}^{} \otimes Z_{B_1}^{-1},
\end{align}
by applying Proposition \ref{proposition2} twice, first on part $A$ and then
on part $B$, using Clifford unitaries local to each part. The commutation
relation between $s^{(j)}_{A}, s^{(k)}_{A}$ in Eq.~\eqref{eqn202} is left
unchanged because it is invariant under conjugations by unitaries.

STEP 2B - The fact that $[s^{(j)}_{AB}, s^{(k)}_{AB}]=0$ together with
Eqs.~\eqref{eqn202} and \eqref{eqn203} imply that
\begin{align}
\label{eqn204}
s^{(k)}_{AB} &= X_{A_1}^{} Z_{A_1}^{\alpha} p^{}_{A \backslash A_1} \otimes X_{B_1}^{} Z_{B_1}^{\beta} q^{}_{B \backslash B_1}
\end{align}
where $p,q$ are some Pauli products on the remaining qudits on part $A,B$
respectively and $\alpha, \beta \in \ZZ_D$. To see this, first observe that
Eqs.~\eqref{eqn202} and \eqref{eqn203} tell us the exponent on $X_{A_1}$ is
necessarily 1 while imposing no conditions on the exponent of $Z_{A_1}$ nor
the Pauli operators on the other qudits in part $A$. Next, $s^{(j)}_{AB}$
commuting with $s^{(k)}_{AB}$ fixes the exponent of $X_{B_1}$ to be 1,
resulting in Eq.~\eqref{eqn204}.

In \Eq{eqn204} the $Z$ operators on qudits $A_1$ and $B_1$ can be removed by
conjugating $s^{(k)}_{AB}$ with the phase gate $W$ applied to these qudits
(see Table \ref{table1}). Lastly, the Pauli products $p$ and $q$ can be set to
$I$ using methods outlined in the proof of Proposition \ref{proposition2}. The
conjugations by $\mathrm{CNOT}$ gates described there with $A_1,
B_1$ being the control qudits do not modify $s^{(j)}_{AB}$ at all. At the end
of all these transformations we have
\begin{align}
\label{eqn205}
s^{(k)}_{AB} = X_{A_1} \otimes X_{B_1}.
\end{align} 

STEP 2C - Keeping track of all the Clifford unitaries that have acted so far on
the two parts and combining them as $U_A, U_B$, we can write the new
stabilizer group as $(U_A U_B I_C) \SC (U_A U_B I_C)^\dag = \langle
t_1=X_{A_1} X_{B_1}, t_2=Z_{A_1}^{} Z_{B_1}^{-1}, t_3, \ldots \rangle$ by
Proposition \ref{proposition3}.
Since $t_3$ must commute with $t_1$ \emph{and} $t_2$, it must be of
the form
\begin{align}
\label{eqn206}
t_3 = X_{A_1}^{\alpha} Z_{A_1}^{\beta} p^{}_{A\backslash A_1} \otimes X_{B_1}^{\alpha} Z_{B_1}^{-\beta} q^{}_{B \backslash B_1} \otimes r^{}_C
\end{align}
for some $\alpha, \beta \in \ZZ_D$, where $p,q,r$ are some Pauli products on
subsystem $A$ less $A_1$, $B$ less $B_1$ and $C$ respectively, reusing
arguments that produced Eq.~\eqref{eqn204}. Next replace the generator $t_3$
by $t_3' = t_2^{-\beta} t_1^{-\alpha} t^{}_3$ so that $t_3'$ has identities on
qudits $A_1, B_1$. The same can be done for all other generators, $t_4,
\ldots, t_n$.

STEP 2D - Given this, we can write the new stabilizer group as $\langle
X_{A_1} X_{B_1}, Z_{A_1}^{} Z_{B_1}^{-1} \rangle \otimes \langle t_3', \ldots,
t_n' \rangle$: an EPR pair stabilized by $\langle X_{A_1} X_{B_1}, Z_{A_1}^{}
Z_{B_1}^{-1} \rangle$ and a stabilizer state $\ket{\SC'}$ stabilized by
$\langle t_3', \ldots, t_n' \rangle$.
The reduced density operators $\rho'_A$ and $\rho'_B$ corresponding to
$\ket{\SC'}$ are again proportional to the identity operators on the parts of
$A$ and $B$ that remain after the extraction.  Were it otherwise in the case
of $\rho'_A$, the reduced density operator on the (full) system $A$
corresponding to the state stabilized by $\langle X_{A_1} X_{B_1}, Z_{A_1}^{}
Z_{B_1}^{-1} \rangle \otimes \langle t_3', \ldots, t_n' \rangle$ would not be
proportional to the identity, contradicting the fact that after Step 1
(extraction of unentangled subsystems), and thus at the beginning of Step 2,
$\rho_A$ was proportional to the identity; of course the same applies to
$\rho'_B$. 

STEP 2E - Next examine the expansion of $\rho'_{AB}$ putting
$\ket{\SC'}$ in \eqref{eqn201}.  If some of the $s^{(i)}_{A}$ do not commute
with each other a further extraction is possible, and one can repeat the
process until all of these operators commute. Now apply the same extraction
process to $A$ and $C$, and then to $B$ and $C$, until all EPR pairs have been
extracted.

\bigskip

 The final step is showing that the tripartite stabilizer state after
 extracting all unentangled states and EPR pairs is a collection of GHZ states. 

STEP 3A - We prove that after the preceding extractions have been carried out, the three parts must have the same number of qudits, $n_A = n_B = n_C$. Since $\rho_C$ is proportional to the identity, $\ket{\SC}$ is maximally entangled across the $C$-$AB$ cut. Using its Schmidt form allows the the projector on $\ket{\SC}$ to be written as
\begin{align}
\label{eqn250}
\ket{\SC} \bra{\SC} = \frac{1}{D^{n_C}} \sum_{i,j=0}^{D^{n_C}-1} \ket{i} \bra{j}_C \otimes \ket{\phi_i} \bra{\phi_j}_{AB},
\end{align}
with $\{\ket{\phi_i} \}$ a set of orthonormal kets in $\HC_A \otimes \HC_B$.
The $\{\ket{i} \bra{j}_C\}$ are a set of $D^{2 n_C}$ linearly independent
operators, as are the $\{\ket{\phi_i} \bra{\phi_j}_{AB}\}$, which means the
\emph{operator Schmidt rank} of $\ket{\SC} \bra{\SC}$ is $D^{2 n_C}$.

From \Eq{eqn11} the projector can also be expressed as a sum of the $D^n$ linearly independent stabilizer elements in $\SC$
\begin{align}
\label{eqn251}
\ket{\SC} \bra{\SC} &= \frac{1}{D^{n}} \sum_{k=1}^{D^{n}} r^{(k)} \nonumber\\
&= \frac{1}{D^{n}} \sum_{k=1}^{D^{n}} r_C^{(k)}\otimes r_{AB}^{(k)}
\end{align}
where the $r_{AB}^{(k)}$ and $r_C^{(k)}$ are Pauli products on $AB$ and $C$, respectively. In general, the collection $\{r_C^{(k)}\}$ is not linearly independent; e.g., each $r^{(k)}$ that belongs to the subgroup $\SC_{AB}$ satisfies $r_C^{(k)} \propto I_C$. Indeed, two elements $r^{(k)}$ and $r^{(l)}$ belong to the same coset of $\SC_{AB}$ if and only if $r_C^{(k)} \propto r_C^{(l)}$. Therefore the number of cosets of $\SC_{AB}$ is the number of linearly independent elements in the collection $\{r_C^{(k)}\}_{k=1}^{D^n}$. We shall now prove that this number is $D^{2 n_C}$.

We can re-express \Eq{eqn251} as a sum over linearly independent Pauli products on part $C$
\begin{align}
\label{eqn252}
\ket{\SC} \bra{\SC} \propto \sum_{k'=1}^{D^{2 n_C}} q_C^{(k')} \otimes x_{AB}^{(k')},
\end{align}
where $\{x_{AB}^{(k')}\}$ are sums of Pauli products on parts $A$ and $B$. We know there must be $D^{2 n_C}$ linearly independent terms since that is the operator Schmidt rank and thus none of the $x_{AB}^{(k')}$ can vanish. Hence the number of cosets of $\SC_{AB}$ is $D^{2 n_C}$. Thus by Lagrange's theorem, 
\begin{align}
\label{eqn253}
|\SC_{AB}| = |\SC|/D^{2n_C} = D^{n_A+n_B-n_C}.
\end{align}

We now go on a digression to show that the collection of Pauli products on
subsystem $A$, $\{s_A^{(i)}\}$, on the right side of \Eq{eqn201}, is linearly
independent when $\rho_B \propto I_B$; similarly, the collection
$\{s_B^{(i)}\}$ is linearly independent if $\rho_A \propto I_A$.  Because they
are Pauli products it suffices to show that no two of them are proportional in
order to demonstrate linear independence.  Assume the contrary, that
$s^{(j)}_{A} \propto s^{(k)}_{A}$ for some $j \neq k$. Since the
$\{s_{AB}^{(i)}\}$ are group elements, $s_{AB}^{(j)}$ must have an inverse,
 $\left( s_{AB}^{(j)} \right)^{-1}$,
which inserted in \Eq{eqn201} yields
\begin{align}
\label{eqn254}
\left( s_{AB}^{(j)} \right)^{-1} s_{AB}^{(k)} &= \left( s_A^{(j)} \otimes s_B^{(j)} \right)^{-1} (s_A^{(k)} \otimes s_B^{(k)}) \nonumber \\
& \propto I_A \otimes \left(s_B^{(j)} \right)^{-1} s_B^{(k)}.
\end{align}
The final term cannot be proportional to the identity, as that would imply
$s_{AB}^{(j)} \propto s_{AB}^{(k)}$. But these are elements of $\SC$, so they
must be linearly independent for $j \neq k$. Therefore $\SC_B$
contains an element that is not the identity, contradicting the fact that
$\rho_B\propto I_B$. Hence it cannot be the case that $s^{(j)}_{A} \propto
s^{(k)}_{A}$ for $j\neq k$.  

Thus the collection $\{s_A^{(i)}\}$ contains $|\SC_{AB}|$ linearly independent\emph{and} mutually commuting elements, and by Proposition \ref{proposition1} this means that $|\SC_{AB}| \leqslant D^{n_A}$. The same argument applies to the collection $\{s_B^{(i)}\}$, so $|\SC_{AB}| \leqslant D^{n_B}$. Combining these inequalities with \Eq{eqn253}, it follows that both $n_A$ and $n_B$ cannot be larger than $n_C$. Identical arguments applied to different pairs of subsystems implies that
\begin{align}
\label{eqn255}
n_A&=n_B=n_C,
\end{align}
and, using  \Eq{eqn253}, $|\SC_{AB}| = |\SC_{BC}| = |\SC_{AC}|= D^{n_A}$.

STEP 3B - If $\SC_{BC}$ contains only the
identity element, then $n_A=n_B=n_C=0$ and no GHZ state can be
extracted. Otherwise $n_A \geqslant 1$ and $\SC_{BC}$ has at least one
non-trivial element which we will label as $t_1$. By Proposition
\ref{proposition2} it can be transformed to
\begin{align}
\label{eqn256}
t_1 = Z_{B_1}^{} \otimes Z_{C_1}^{-1}.
\end{align}

Now consider the group $\SC_{AB}$. We showed previously that the operators
$\{s_B^{(i)}\}$ defined in Eq.~\eqref{eqn201} are both linearly independent
and mutually commuting, and $|\{s_B^{(i)}\}| = D^{n_B}$. Hence, if there is an
element $p \in \PC_{n_B}$ that commutes with every element in $\{s_B^{(i)}\}$,
it must belong to $\{s_B^{(i)}\}$ up to a phase. This is because, by
Proposition \ref{proposition1}, sets of mutually commuting Pauli products on
$n_B$ qudits cannot have more than $D^{n_B}$ linearly independent elements.

Note that $t_1$ commuting with every element of $\SC_{AB}$ (as they are all just elements of $\SC$) implies $ Z_{B_1}^{}$ commutes with every $s_B^{(i)}$ since $t_1$ is identity on part $A$ and every $s \in \SC_{AB}$ is identity on part $C$. Then by the argument in the previous paragraph, there exists an element in $\SC_{AB}$ of the form $p_A \otimes Z_{B_1}$. Let $t_2$ denote this element and note that $p_A \neq I_A$ as otherwise this would contradict the assumption that $\rho_B$ is proportional to the identity. Then by Proposition \ref{proposition2}, there exists unitary transformations such that
\begin{align}
\label{eqn257}
t_2 = Z_{A_1}^{-1} \otimes Z_{B_1}^{}
\end{align}
and they do not affect $t_1$ since all the operations are done on part $A$.

STEP 3C - Since each of the $D^{2 n_C}$ Pauli products for part $C$ must appear
in Eq.~\eqref{eqn252}---none of $x_{AB}^{(k')}$ can vanish---there exists an
element in $\SC$ which satisfies $q_{C}^{(k')} \propto X_{C_1} I_{C \backslash C_1}$ on the $C$ subsystem. We call that element $t_3$. The most general form it can have,
given that it has to commute with $t_1 = Z_{B_1}^{} Z_{C_1}^{-1}$ and $t_2 =
Z_{A_1}^{-1} Z_{B_1}^{}$, is
\begin{align}
\label{eqn258}
t_3 = X_{A_1} Z_{A_1}^{\alpha} p_{A \backslash A_1} \otimes X_{B_1} Z_{B_1}^{\beta} q_{B \backslash B_1} \otimes X_{C_1},
\end{align}
where $p,q$ are Pauli products on the remaining qudits in part $A,B$
respectively and $\alpha, \beta \in \ZZ_D$; see STEP 2B for the
explanation. Finally we transform this element to $t_3 = X_{A_1} X_{B_1}
X_{C_1}$ without modifying $t_1, t_2$ by using the techniques described in
STEP 2B and in the proof of Proposition \ref{proposition2}.

STEP 3D - Invoking Proposition \ref{proposition3} and letting $U_A, U_B, U_C$
denote all the unitary operations we have made so far, we can write $(U_A U_B
U_C) \SC (U_A U_B U_C)^\dag = \langle t_1 = Z_{B_1}^{} Z_{C_1}^{-1}, t_2 =
Z_{A_1}^{-1} Z_{B_1}^{}, t_3 = X_{A_1} X_{B_1} X_{C_1}, t_4, \ldots,
t_n\rangle$.
As was done in STEP 2C and the proof of Bipartition Theorem, the generators
$t_4, \ldots, t_n$ can all be made to be the identity on qudits $A_1, B_1,
C_1$ simultaneously, so we can write $U_A U_B U_C \SC (U_A U_B U_C)^\dag =
\langle Z_{B_1}^{} Z_{C_1}^{-1}, Z_{A_1}^{-1} Z_{B_1}^{}, X_{A_1} X_{B_1}
X_{C_1}\rangle \otimes \langle t'_4, \ldots, t'_n \rangle$. Therefore we have
a GHZ state, see Eq.~\eqref{eqn13}, on qudits $A_1, B_1, C_1$, tensored with a
state on a system which has one fewer qudit in each part. This extraction
process can be repeated until $n_A = n_B = n_C =0$.
\end{proof}

\xa
\outl{Some comments about tripartition theorem.}
\xb


\section{Application in Quantum Error Correction}
\label{section5}

\xa
\outl{Give some motivation about the problem.}
\xb

In this section we apply the Tripartition Theorem to solve a problem in the
area of quantum error correction. It allows us to understand the structure of
a class of quantum channels derived from qudit stabilizer codes (see
Subsection \ref{subsection2c}) of prime $D$ by decomposing them to a tensor
products of perfect quantum channels, perfectly decohering channels and
completely depolarizing channels.

The connection between tripartite stabilizer states and stabilizer codes is worked out in Subsection \ref{subsection5a} using Choi-Jamio\l kowski isomorphism or map-state duality. This leads to a class of channels which we call \emph{stabilizer code channels} defined in Subsection \ref{subsection5b}, whose decomposition into simple channels is the topic of Subsection \ref{subsection5c}. A duality between the subset information groups of such a channel and its complementary channel is demonstrated in Subsection \ref{subsection5d}.

\subsection{Isomorphism Between Stabilizer Codes and Stabilizer States}
\label{subsection5a}

Let $\HC_A$ and $\HC_B$ be two Hilbert spaces, and $\{\ket{i}_A\}$ an orthonormal basis for $\HC_A$. Then there is a one-to-one correspondence between a linear map $M$ from $\HC_A$ and $\HC_B$ and a ket $\ket{M}$ on the tensor product $\HC_A\otimes\HC_B$ conveniently expressed in Dirac notation as changing bras to kets or vice versa:
\begin{equation}
  M = \sum_i \ket{\beta_i}\bra{i}_A,\quad 
\ket{M} = \sum_i \ket{\beta_i}\otimes\ket{i}_A.
\label{eqn271}
\end{equation}
Here the $\{\ket{\beta_i}\}$ are elements of $\HC_B$, in general neither
orthogonal nor normalized, uniquely determined by $M$ or by $\ket{M}$ as the
case may be. This \emph{Choi-Jamio\l kowski isomorphism} %
\footnote{The idea goes back to Choi \cite{laa-10.285} and Jamio\l kowski \cite{jamiol} and even earlier; see \cite{quant-ph.0401119} for extensive references to the literature. The isomorphism is usually defined between quantum channels and bipartite density operators so technically our definition is just ``half" of the standard form, see \cite{PhysRevA.73.052309, arrighi-2004-311}. } %
depends on the choice of the basis $\{\ket{i}_A\}$; in what follows this will always be the computational basis. For convenience we introduce a normalization factor of $\sqrt{d_A}$ in defining the following isomorphism and its inverse:
\begin{equation}
  \Phi(M) = \ket{M}/\sqrt{d_A},\quad
 \Phi^{-1}(\ket{M}) = \sqrt{d_A}\, M,
\label{eqn272}
\end{equation}
with $M$ and $\ket{M}$ related by \Eq{eqn271}. It is straightforward to show that
\begin{align}
\Phi(U_B M U_A) &= (U_A^{T} U_B)\Phi(M), \nonumber \\ 
\Phi^{-1}\left[(U_B\otimes U_A)\ket{M}\right] &= U_B \Phi^{-1}(\ket{M})U_A^{T},
\label{eqn273}
\end{align}
where the transpose $T$ refers to the $\{\ket{i}_A\}$ basis, and $U_A$ and $U_B$ are any operators on $\HC_A$ and $\HC_B$, respectively.

Next assume that $d_A \leqslant d_B$ and define an isometry $V : \HC_A
\rightarrow \HC_B$ as
\begin{align}
\label{eqn300}
V &:=  \sum_{i=0}^{d_A} \ket{v_i}_B\bra{i}_A,
\end{align}
where $\{\ket{v_i}\}$ is an orthonormal basis spanning a $d_A$-dimensional
subspace of $\HC_B$ that we call the \emph{coding space}. The isomorphism
carries it into 
\begin{align}
\label{eqn301}
\ket{V} := \Phi(V) = 
\frac{1}{\sqrt{d_A}} \sum_{i=0}^{d_A} \ket{v_i}_B \ket{i}_A,
\end{align}
where, since $V$ is an isometry, $\rho_A = \Tr_{B} \left\{\ket{V}
  \bra{V}\right\} = I_A/d_A$. One can think of $\ket{V}$, where $\HC_A$ and
$\HC_B$ enter on an equal footing, as an \emph{atemporal} representation of
the isometry \cite{PhysRevA.73.052309}.

The following lemma uses the Choi-Jamio\l kowski isomorphism to relate isometries corresponding to stabilizer codes to stabilizer states. 

\begin{lemma}[Isomorphism between stabilizer code and bipartite stabilizer
  state]
\label{lemma8}
Let $D$ be the dimension of any one of the qudits. The two statements below
are true for any prime $D$
\begin{enumerate}
\item Given an $[[n, k]]_D$ stabilizer code (Subsection
  \ref{subsection2c}) that defines an isometry $V : \HC^{\otimes k}
  \rightarrow \HC^{\otimes n}$, the isomorphism $\Phi$ in 
  \Eq{eqn272} carries $V$ to a stabilizer state $\ket{V}$ on $k+n$
  qudits.

\item
Let $\ket{\SC}_{AB}$ be a stabilizer state on $k+n$ qudits,
 where the first $k$ qudits constitute part $A$
and the remaining $n$ qudits constitute part $B$, and assume that
$\Tr_{B} \left\{\ket{\SC} \bra{\SC}\right\} = I_A/D^k$.
Then the inverse isomorphism $\Phi^{-1}$ in \Eq{eqn272} carries
$\ket{\SC}_{AB}$ to an isometry $V : \HC_A \rightarrow \HC_B$ whose image or
coding space is an $[[n, k]]_D$ stabilizer code.
\end{enumerate}
\end{lemma}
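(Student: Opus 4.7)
The plan is to exhibit, in each direction, an explicit abelian subgroup of $\PC_{k+n}$ of size $D^{k+n}$ on the combined $k+n$ qudits, which (by Proposition \ref{proposition1} together with the $|\SC|\cdot\dim(\CC)=D^n$ duality noted in Subsection \ref{subsection2c}) uniquely stabilizes the relevant state or projects onto the appropriate code space.

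For part 1, let $\SC$ denote the stabilizer group of the code, with $|\SC|=D^{n-k}$, and let $\{\bar{X}_i, \bar{Z}_i\}_{i=1}^k \subset \PC_n$ be a standard set of logical Pauli operators, which commute with every $s \in \SC$ and satisfy $\bar{X}_i \bar{Z}_j = \omega^{\delta_{ij}} \bar{Z}_j \bar{X}_i$. The defining relations $sV=V$ and $\bar{L}V = V L$, with $L \in \{X_i, Z_i\}$ acting on $\HC^{\otimes k}$, combined with the transformation rule \Eq{eqn273}, show that $\ket{V}$ is fixed by the $(n-k)+2k=n+k$ Pauli products $I_A \otimes s$ together with $L^{-T}_{A_i} \otimes \bar{L}_i$. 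Using $X^T = X^{-1}$ and $Z^T = Z$, the latter take the explicit form $X_{A_i} \otimes \bar{X}_i$ and $Z_{A_i}^{-1} \otimes \bar{Z}_i$. A short commutator check shows the $\omega$-phase picked up on the $A$-side between $X$ and $Z^{-1}$ exactly cancels the phase from $\bar{X}\bar{Z}$ on the $B$-side, so all $n+k$ operators pairwise commute; their distinct $A$-side footprints force linear independence, so the abelian group they generate has size $D^{n+k}$, and its unique stabilized state must be $\ket{V}$.

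For part 2, first verify that $V := \Phi^{-1}(\ket{\SC})$ is an isometry. Writing $\ket{\SC} = \sum_i \ket{\gamma_i}_B \ket{i}_A$, the hypothesis $\rho_A = I_A/D^k$ forces the $\ket{\gamma_i}$ to be pairwise orthogonal with common norm $1/\sqrt{D^k}$, so that $\ket{v_i} := \sqrt{D^k}\,\ket{\gamma_i}$ are orthonormal and span a $D^k$-dimensional coding space. Take as candidate stabilizer of the code the subgroup $\SC_B := \{s \in \SC : s \text{ is the identity on part } A\}$. For any $s_B \otimes I_A \in \SC_B$, matching coefficients of $\ket{i}_A$ in $(s_B \otimes I_A)\ket{\SC} = \ket{\SC}$ yields $s_B \ket{v_i} = \ket{v_i}$ for every $i$, so $\SC_B$ fixes every codeword. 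Counting: since $\rho_B = VV^\dagger/D^k$ has rank $D^k$, \Eq{eqn16} gives $|\SC_B| = D^{n-k}$, whence the simultaneous $+1$ eigenspace of $\SC_B$ in $\HC^{\otimes n}$ has dimension $D^n/|\SC_B| = D^k$ by the stabilizer/code duality. The coding space lies inside this eigenspace and has the same dimension, so the two coincide, exhibiting the coding space as the $[[n,k]]_D$ stabilizer code with stabilizer group $\SC_B$.

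The principal obstacle is the bookkeeping in part 1: tracking the transpose on the $A$-side induced by \Eq{eqn273} and verifying that the resulting phases cancel so that the constructed generators genuinely commute. Once that is settled, part 2 is essentially a counting argument built on \Eq{eqn16} and the stabilizer/code duality already established in Subsection \ref{subsection2c}.
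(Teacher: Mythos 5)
Your proof is correct, but it takes a genuinely different route from the paper's. The paper proves both directions by reducing to graph codes and graph states: for part 1 it invokes the equivalence (for prime $D$) of stabilizer codes to additive graph codes, writes the encoding isometry in terms of a graph state $\ket{G}$ and a $Z$-type coding group $\langle f_1,\ldots,f_k\rangle$, and exhibits the $k+n$ generators explicitly as $Z_{A_1}^{-\beta_{j1}}\cdots Z_{A_k}^{-\beta_{jk}}\otimes g_j$ and $X_{A_l}\otimes (f_l^{-1})_B$; for part 2 it writes $\ket{\SC}$ as a graph state, splits the $\CP$ gates into $A$-internal, $B$-internal and crossing pieces, and reads off the additive graph code from the crossing gates, with orthogonality of the codewords coming from the Schmidt form. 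You instead stay inside the abstract stabilizer formalism: in part 1 you transport a symplectic set of logical operators $\{\bar{X}_i,\bar{Z}_i\}$ through \Eq{eqn273} to obtain the generators $I_A\otimes s$, $X_{A_i}\otimes\bar{X}_i$, $Z_{A_i}^{-1}\otimes\bar{Z}_i$ --- your bookkeeping ($X^T=X^{-1}$, $Z^T=Z$, and the cancellation of $\omega^{-1}$ against $\omega$ between the $A$ and $B$ sides) checks out --- and in part 2 you replace the paper's explicit computation by a counting argument built on \Eq{eqn16} and the duality $|\SC|\cdot\dim(\CC)=D^n$, which is shorter and cleaner. What the paper's route buys is explicitness (concrete generators in terms of the adjacency matrix, in the same language reused in Section~\ref{section5}) at the cost of citing the graph-code equivalence theorems; your route avoids graph states entirely but imports the standard fact that for prime $D$ the normalizer of $\SC$ admits a symplectic basis of Pauli logical operators with $\bar{X}_i\bar{Z}_j=\omega^{\delta_{ij}}\bar{Z}_j\bar{X}_i$, which the paper never develops. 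Two small points you should make explicit, each a one-line fix: the code alone does not determine $V$ (an arbitrary basis rotation on $\HC_A$ would destroy stabilizerness of $\ket{V}$), so --- exactly as the paper does with its graph-code basis --- you must state that you choose the encoding basis adapted to the logical operators, e.g.\ $\ket{v_{\vec{\imath}}}:=\bar{X}_1^{-i_1}\cdots\bar{X}_k^{-i_k}\ket{v_{\vec{0}}}$ with the phases of the $\bar{X}_i$ adjusted so that $\bar{X}_i^D=I$, so that $\bar{L}V=VL$ actually holds; and in part 2, before invoking the stabilizer/code duality you should note that the $B$-parts of $\SC_B$ are linearly independent (no two elements of $\SC$ are proportional) and satisfy $s_B^D=I$, so that $\SC_B$ is a bona fide stabilizer group on $\HC_B$.
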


\begin{proof}
  The proof can be simplified by noting that, as shown in
  \cite{quantph.0703112, quantph.0111080}, any stabilizer code when $D$ is
  prime is equivalent to an additive graph code, up to products of
  single-qudit Clifford unitaries. In turn it was shown in
  \cite{PhysRevA.78.042303} and \cite{PhysRevA.81.032326} that an additive
  $[[n, k]]_D$ graph code when $D$ is prime can be described using a graph
  state $\ket{G} \in \HC^{\otimes n}$ and an abelian group $\FC =
  \{c_i\}_{i=0}^{D^k-1}$ with $D^k$ linearly independent Pauli products in
  $\PC_n$, composed only of $Z$ operators. The coding space is spanned by the set $\{c_i \ket{G}\}$ of mutually orthogonal
  kets.
Recall from Subsection \ref{subsection2c} that any graph state $\ket{G}$ on
$n$ qudits can be fully specified by its stabilizer group, $\langle g_1,
\ldots, g_n\rangle$ with $D^n$ elements, see \Eq{eqn53}. Similarly, the coding
group $\FC$ can be generated by $k$ suitably chosen group elements, $\FC =
\langle f_1, f_2, \ldots, f_k\rangle$, so the coding space is spanned by
$\{f_{1}^{i_1} \cdots f_{k}^{i_k} \ket{G}\}$ for $i_1, i_2, \ldots, i_k =
0,1,\ldots,D-1$.

To prove statement 1, define the isometry 
\begin{align}
\label{eqn302}
V &= \sum_{i_1=0}^{D-1} \cdots \sum_{i_k=0}^{D-1} f_{1}^{i_1}  
\cdots f_{k}^{i_k} \ket{G}_B \bra{i_1 \cdots i_k}_A
\end{align}
that maps $\HC_A = \HC^{\otimes k}$, a collection of $k$ qudits, into the coding space,
and 
\begin{align}
\label{eqn303}
\ket{V} &= \frac{1}{\sqrt{D^k}}\sum_{i_1=0}^{D-1} \cdots 
\sum_{i_k=0}^{D-1}  \ket{i_1 \cdots i_k}_A \otimes f_{1}^{i_1}\cdots f_{k}^{i_k} \ket{G}_B.
\end{align}
the corresponding ket on $\HC_A\otimes \HC_B$ as per \Eq{eqn272}.

We shall show that $\ket{V}$ is a stabilizer state by exhibiting the $k+n$
stabilizer generators. The first $n$ are derived from the generators
$\{g_j\}$, for $j=1,2,\ldots, n$ of the stabilizer group of $\ket{G}$, now
regarded as operators on $\HC_A\otimes \HC_B$, so that
\begin{align}
\label{eqn304}
(I_A \otimes g_j) \ket{V} &= \frac{1}{\sqrt{D^k}}\sum_{i_1=0}^{D-1} \cdots
\sum_{i_k=0}^{D-1}  \ket{i_1 \cdots i_k}_A 
\nonumber\\
& \quad \otimes g_j \; f_{1}^{i_1} \cdots f_{k}^{i_k} \ket{G}_B 
\nonumber \\
&= \frac{1}{\sqrt{D^k}}\sum_{i_1=0}^{D-1} \cdots \sum_{i_k=0}^{D-1}  
\ket{i_1  \cdots i_k}_A 
\nonumber \\
& \quad \otimes \omega^{i_1 \beta_{j1}} \cdots 
\omega^{i_k \beta_{jk}} f_{1}^{i_1} \cdots f_{k}^{i_k} \ket{G}_B,
\end{align}
where the phases result from commuting the $g_j$ with the $f_l$ operators:
\begin{align}
\label{eqn305}
g_j f_l &= \omega^{\beta_{jl}} f_l g_j 
\quad \text{for $j=1,\ldots, n$ and $l = 1,\ldots, k$};
\end{align}
with the $\beta_{jl}$ integers in $\ZZ_D$.  The phases can be removed using
an appropriate Pauli product of $Z$ operators:
\begin{align}
\label{eqn306}
\left( Z_{A_1}^{-\beta_{j1}} \cdots 
Z_{A_k}^{-\beta_{jk}}\otimes g_j \right)\ket{ V} 
&= \ket{V} \quad \text{for $j=1,\ldots, n$},
\end{align}
That these $n$ generators mutually commute follows from the fact that 
$\{g_j\}$ is a mutually commuting set.

The remaining $k$ stabilizer generators are simply
\begin{equation}
\label{eqn307}
X_{A_l} \otimes (f_l^{-1})_B  \quad \text{for $l = 1,\ldots, k$.}
\end{equation}
They obviously commute among themselves.  It is not hard to show that they
leave $\ket{V}$ in Eq.~\eqref{eqn303} invariant, and commute with the $n$
previous generators on the left side of \Eq{eqn306}.  Since we have constructed
$k+n$ linearly independent commuting Pauli operators that leave $\ket{V}$
invariant, and thus generate a stabilizer group, $\ket{V}$ is a stabilizer
state.

\bigskip

To prove statement 2 of the lemma we use the fact that for prime $D$, any
stabilizer state $\ket{\SC}$ is equivalent up to single-qudit Clifford
unitaries, thus a choice of basis for the individual qudits, to a graph
state. Hence without loss of generality we can assume that $\ket{\SC}$ is a
graph state on $\HC_A\otimes\HC_B$ written in the form, see Eq.~\eqref{eqn51},
\begin{align}
\label{eqn309}
\ket{\SC}_{AB} &= \left( \prod_{i<j}^{n+k} \CP_{ij}^{\Gamma_{ij}} \right) \ket{+}_A^{\otimes k} \otimes \ket{+}_B^{\otimes n}.
\end{align}
Let $\CP_A$ be the product of those $\CP$ gates on the right side that act only
on qudits in $A$, and let $\ket{G}_B = \CP_B \ket{+}_B^{\otimes n}$ be the graph state on $\HC_B$ resulting from the $\CP$ gates that act only on qudits in $B$, denoted by $\CP_B$. The action of the remaining $\CP$ gates that connect $A$ and $B$ qudits can be written out explicitly in terms of $Z$ operators, see \Eq{eqn8}, to obtain
\begin{align}
\label{eqn310}
\ket{\SC} &= \CP_A \left( \prod_{i=1}^{k} \prod_{j=k+1}^{n+k}
  \CP_{ij}^{\Gamma_{ij}} \right)   \ket{+}_A^{\otimes k} \ket{G}_B 
\nonumber \\ 
&= \frac{1}{\sqrt{D^k}} \CP_A \sum_{i_1=0}^{D-1} \cdots \sum_{i_k=0}^{D-1}
\ket{i_1 \cdots i_k}_A 
\nonumber \\
& \quad \otimes f_1^{i_1} \cdots f_k^{i_k} \ket{G}_B
\end{align}
where
\begin{align}
\label{eqn311}
f_l &:= \prod_{j=k+1}^{n+k} Z_j^{\Gamma_{lj}} \qquad \text{for $l=1,2,\ldots,k$,}
\end{align}
and so
\begin{align}
\prod_{j=k+1}^{n+k} \CP_{lj}^{\Gamma_{lj}} &= \sum_{i_l=0}^{D-1} \ket{i_l} \bra{i_l} \otimes f_l^{i_l} \qquad \text{for $l = 1,\ldots, k$.}
\end{align}

Comparing the last line of Eq.~\eqref{eqn310} with Eq.~\eqref{eqn303} yields
\begin{align}
\label{eqn312}
\Phi^{-1} [(\CP_A)^\dag \ket{\SC}] &= \sum_{i_1=0}^{D-1} \cdots
\sum_{i_k=0}^{D-1} f_1^{i_1} \cdots f_k^{i_k} \ket{G}_B 
\nonumber \\
&\quad \bra{i_1 \cdots i_k}_A,
\end{align}
which is an isometry corresponding to an additive graph code with graph state
$\ket{G}_B \in \HC^{\otimes n}$ and coding group $\CC = \langle f_1,
\ldots, f_k\rangle$. Therefore, using \Eq{eqn273}
\begin{align}
\label{eqn313}
\Phi^{-1} (\ket{\SC}) &= \sum_{i_1=0}^{D-1} \cdots \sum_{i_k=0}^{D-1} U_B
f_1^{i_1} \cdots f_k^{i_k} \ket{G}_B 
\nonumber \\
& \quad \bra{i_1 \cdots i_k}_A [(\CP_A)^\dag ]^T,
\end{align}
where the transpose is taken in the computational basis. The final factor is
simply a unitary transformation on the input and thus does not change its
image, which is the graph or stabilizer code spanned by the $f_1^{i_1} \cdots
f_k^{i_k} \ket{G}_B$.  That these last are a collection of $D^k$ mutually
orthogonal kets follows from the assumption that $\Tr_{B} \left\{ \ket{\SC}
  \bra{\SC} \right\} = I_A/D^k$, and the fact that the final equality in
\Eq{eqn310} is a Schmidt decomposition of $\ket{\SC}$.
\end{proof}

\subsection{Stabilizer Code Channels and Subset Information Groups}
\label{subsection5b}

\xa
\outl{Define SCE channel.}
\xb

Consider an isometry $V : \HC_A=\HC^{\otimes k} \rightarrow \HC^{\otimes n}$
corresponding to an $[[n,k]]_D$ stabilizer code where $D$ is prime and the $n$
output qudits are partitioned into two disjoint non-empty subsets, $B$ and
$C$. Such bipartitions of stabilizer/graph codes have been studied in
\cite{PhysRevA.81.032326} and \cite{PhysRevA.81.052302}. (Qubit stabilizer
codes with input qudits partitioned into two parts have also been
studied in \cite{springerlink:10.1007/s11128-010-0175-0} but in this paper we
will only consider bipartitions of the output qudits.)
We can think of the $B$ qudits as the output of a \emph{direct} quantum
channel, and the $C$ qudits either as the environment or as the output of
\emph{complementary} channel, with corresponding superoperators
\begin{align}
\label{eqn350}
\EC_B(\rho) := \Tr_C\{ V \rho V^\dag\} \qquad
 \EC_C(\rho) := \Tr_B\{ V \rho V^\dag\}.
\end{align}
We shall refer to channels derived in this way from stabilizer codes as
\emph{stabilizer code channels}. The analysis below applies to any stabilizer
code regardless of its error correction properties such as code distance.

\xa
\outl{Mention results on subset information group from previous paper.}
\xb

In Section V of \cite{PhysRevA.81.032326}, we studied the information-carrying
capacity of stabilizer code channels and to this end introduced the 
\emph{subset
  information group}
\begin{align}
\label{eqn351}
\GC_B := \{p \in \PC_k \;|\; \EC_B(p) \neq 0 \},
\end{align}
a subgroup of the Pauli group on the $k$ input qudits.
It was shown that $\GC_B$ is a group, and its elements satisfy the isomorphism
\begin{align}
\label{eqn352}
\EC_B(p) \; \EC_B(q) = c \; \EC_B(pq ) \quad \forall \; p,q \in \GC_B,
\end{align}
where $c$ is an appropriately chosen positive constant independent of $p$ and $q$.

We also presented an efficient algorithm to find $\GC_B$ given the isometry
$V$ (defined by an additive graph code) and the subset $B$, by solving a set
of linear equations modulo $D$. The subset information group $\GC_C$ for the
complementary channel can be defined in the same way. The previous work
discussed additive graph codes, but the results apply to stabilizer codes as
well since they are equivalent up to local unitaries.

\xa
\outl{Connection to Beny's work.}
\xb

One can think of the group $\GC_B$ or the operator algebra that it spans, a
subalgebra of the algebra $\LC(\HC_A)$ of operators on the channel input, as
representing the information that is perfectly transmitted from the input to
the output $\HC_B$ of the channel by $\EC_B$.  Since it is present in the
output, this information can be perfectly recovered, which is to say mapped to
a Hilbert space $\HC'_A$ isomorphic to $\HC_A$, by a recovery operation
(recovery channel) $\RC: \LC(\HC_B)\rightarrow \LC(\HC'_A)$, as shown in
\cite{PhysRevA.81.032326}. The recovery operation has the property that $[\RC
\circ \EC_B]^\dag (x) = x$, where $x$ is any Pauli product in $\GC_B$ or any
operator in the subalgebra that it spans.  Thus the last is an example of a
\emph{correctable algebra}, as defined in \cite{PhysRevA.76.042303} in their formalism of operator algebra quantum error correction.

In the following subsections we show that stabilizer code channels can
be decomposed into tensor products of simple channels, which has
important implications for the properties of $\GC_B$ and $\GC_C$.

\subsection{Tensor Product Structure of Stabilizer Code Channels}
\label{subsection5c}
\xa
\outl{Briefly summarize results to be presented.}
\xb

Before stating the main result in Theorem~\ref{theorem9}, let us indicate by
means of some simple examples its main idea, which is that the isomorphism proven in Lemma \ref{lemma8} with Theorem~\ref{theorem7} imply stabilizer code channels have a
very simple structure.

\xa
\outl{Consider a few simple states and work out the corresponding isometries, subset information groups and channel capacities.}
\xb

First consider the stabilizer state, $\ket{V} = \ket{\mathrm{EPR}}_{A_1 B_1}$,
Eq.~\eqref{eqn12}, which by Eqs.~\eqref{eqn300} and
\eqref{eqn301} corresponds to the isometry
\begin{align}
\label{eqn353}
\Phi^{-1}(\ket{V}) = V_{\mathrm{EPR}} &= \sum_{i=0}^{D-1} \ket{i}_{B_1} \bra{i}_{A_1} 
\end{align}
for a perfect quantum channel from $\HC_{A_1}$ to $\HC_{B_1}$, with quantum
(and classical) channel capacity equal to $\log_2 D$. The subset
information group contains \emph{every} Pauli operator on qudit $A_1$,
i.e. $\GC_B = \langle \lambda I_{A_1}, X_{A_1}, Z_{A_1} \rangle$.

Next consider the tripartite state $\ket{V} = \ket{\mathrm{EPR}}_{A_1
  B_1}\otimes \ket{\mathrm{EPR}}_{B_2 C_1}$. Tracing out part $C$,
Eq.\eqref{eqn350}, yields the channel $\EC_B$ with the same $\GC_B = \langle
\lambda I_{A_1}, X_{A_1}, Z_{A_1} \rangle$ as before, while the complementary
channel $\EC_C$ is the completely noisy or completely depolarizing channel
whose subset information group is (multiples of) the identity, $\GC_C =
\langle \lambda I_{A_1} \rangle$. Therefore, attaching the state
$\ket{\mathrm{EPR}}_{B_2 C_1}$ to that of the previous example increases the
dimension of the output Hilbert space while leaving the $\HC_A$ to
$\HC_B$ channel unchanged. This is not surprising, since the
$\ket{\mathrm{EPR}}_{B_2 C_1}$ part has nothing to do with the input Hilbert
space $\HC_A$.

\xa
\outl{Isometry obtained from GHZ.}
\xb

As a third example, the tripartite $\ket{\mathrm{GHZ}}$ state from \Eq{eqn13}
\begin{align}
\label{eqn354}
\ket{V} &= \ket{\mathrm{GHZ}}_{A_1 B_1 C_1} =
\frac{1}{\sqrt{D}} \sum_{i=0}^{D-1} \ket{i}_{A_1} \ket{i}_{B_1} \ket{i}_{C_1}
\end{align} 
is carried by the inverse map in \Eq{eqn272} to the isometry
\begin{align}
\label{eqn355}
\Phi^{-1}(\ket{V}) = V_{\mathrm{GHZ}} &= \sum_{i=0}^{D-1} \ket{i}_{B_1} \ket{i}_{C_1} \bra{i}_{A_1}.
\end{align} 
In this case, $\EC_B$ and $\EC_C$ are \emph{perfectly decohering
channels} whose Kraus representation is
\begin{align}
\label{eqn356}
\EC_B(\rho) = \EC_C(\rho) = \sum_{i=0}^{D-1} \ket{i} \bra{i}
\rho \ket{i} \bra{i},
\end{align}
generalizing the qubit phase-flip channel (see Chap. 8 of \cite{QuantumComputation}) to arbitrary $D$. The quantum
channel capacity is zero while the classical channel capacity is $\log_2
D$. The subset information groups for both channels are identical: $\GC_B =
\GC_C = \langle \lambda I_{A_1}, Z_{A_1} \rangle$.

The following theorem states that the isometry of any stabilizer code with a
bipartition defined on the output qudits is equivalent to a tensor product of
isometries of the form $V_{\mathrm{EPR}}$ and $V_{\mathrm{GHZ}}$. See
Figure~\ref{isometry_decomp} for an example of a decomposition of an isometry
of a $[[7,3]]_D$ stabilizer code with $n_B=3$ and $n_C=4$.

\begin{theorem}[Tensor product structure of stabilizer code isometries]
\label{theorem9}
For prime $D$, let $V:\HC_A \rightarrow \HC_B \otimes \HC_C$ be an isometry
corresponding to an $[[n,k]]_D$ stabilizer code with a $B$-$C$ bipartition of
the $n$ output qudits. Then up to unitaries $U_A, U_B, U_C$ on $\HC_A$,
$\HC_B$ and $\HC_C$, $V$ is a tensor product,
\begin{align}
\label{eqn357}
U_B U_C V U_A &= V_{\mathrm{GHZ}}^{\otimes m_{ABC}} \otimes
V_{\mathrm{EPR}}^{\otimes m_{AB}} \otimes V_{\mathrm{EPR}}^{\otimes m_{AC}} 
\nonumber \\
&\quad \otimes \ket{\mathrm{EPR}}^{\otimes m_{BC}} 
\otimes \ket{+}^{\otimes m_B} \otimes \ket{+}^{\otimes m_C},
\end{align}
where $m_{ABC},\, m_{AB},\, m_{AC},\, m_{BC},\, m_B,\, m_C$ are non-negative integers, and $V_{\mathrm{EPR}}$ and $V_{\mathrm{GHZ}}$ are defined in Eqs.~\eqref{eqn353} and \eqref{eqn355}.
\end{theorem}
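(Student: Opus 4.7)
The plan is to lift Theorem~\ref{theorem7} for tripartite stabilizer states across the Choi-Jamio\l kowski isomorphism established in Lemma~\ref{lemma8}. First I apply $\Phi$ to $V$, obtaining the stabilizer state $\ket{V}=\Phi(V)$ on $k+n$ qudits, regarded as a tripartite state on $\HC_A\otimes\HC_B\otimes\HC_C$, where $\HC_A=\HC^{\otimes k}$ carries the $k$ input qudits and $\HC_B$, $\HC_C$ carry the output qudits partitioned as in the theorem. Because $V$ is an isometry, Eq.~\eqref{eqn301} gives $\Tr_{BC}\{\ket{V}\bra{V}\}=I_A/D^k$, so the marginal on part $A$ is maximally mixed.

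Next, I apply Theorem~\ref{theorem7} to this tripartite stabilizer state, obtaining local Clifford unitaries $U_A,U_B,U_C$ such that $(U_A\otimes U_B\otimes U_C)\ket{V}$ takes the tensor-product form of Eq.~\eqref{eqn200} with some non-negative multiplicities $m_A,m_B,m_C,m_{AB},m_{AC},m_{BC},m_{ABC}$. The constraint $\rho_A\propto I_A$ is preserved by local unitaries and forces $m_A=0$: every EPR or GHZ factor that touches $A$ contributes $I/D$ to the marginal on its $A$-qudit, whereas a $\ket{+}_A$ factor would contribute the rank-one projector $\ket{+}\bra{+}$, which is not proportional to the identity. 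This accounts for the absence of any $\ket{+}_A$ term on the right-hand side of Eq.~\eqref{eqn357}.

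Finally, I push the decomposition back through $\Phi^{-1}$. Because $\Phi^{-1}$ acts qudit-by-qudit in the computational basis on the input, it distributes over any tensor-product decomposition in which each factor contains a definite subset of the $A$-qudits; this follows directly from the defining relation in Eq.~\eqref{eqn271}, and verifying it carefully is the main bookkeeping obstacle. Using Eq.~\eqref{eqn273} with the output $\HC_B$ there replaced by $\HC_B\otimes\HC_C$, the left-hand side of the desired identity becomes $(U_B\otimes U_C)VU_A^T$, and each factor on the right is sent to the corresponding object: $\ket{\mathrm{GHZ}}_{ABC}\mapsto V_{\mathrm{GHZ}}$, $\ket{\mathrm{EPR}}_{AB}\mapsto V_{\mathrm{EPR}}$ regarded as an isometry from $A$ to $B$, and $\ket{\mathrm{EPR}}_{AC}\mapsto V_{\mathrm{EPR}}$ regarded as an isometry from $A$ to $C$, while the factors without any $A$-qudit, namely $\ket{\mathrm{EPR}}_{BC}$, $\ket{+}_B$ and $\ket{+}_C$, survive unchanged as states (equivalently, isometries from the trivial one-dimensional input). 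Redefining the free unitary on $\HC_A$ to absorb the transpose $U_A^T$ yields Eq.~\eqref{eqn357}; everything else is a direct assembly of Lemma~\ref{lemma8} and Theorem~\ref{theorem7}.
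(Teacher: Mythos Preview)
Your proposal is correct and follows essentially the same route as the paper: apply Lemma~\ref{lemma8} to pass from $V$ to the stabilizer state $\ket{V}$, invoke Theorem~\ref{theorem7} on the tripartition $A|B|C$, and then pull the decomposition back through $\Phi^{-1}$ using Eq.~\eqref{eqn273}. Your write-up is in fact more explicit than the paper's, which leaves the vanishing of $m_A$ and the factor-by-factor behaviour of $\Phi^{-1}$ implicit.
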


\begin{proof}
  Use Lemma \ref{lemma8} to map the stabilizer code isometry $V$ to a
  stabilizer state $ \Phi(V) = \ket{V}$ on $k+n$ qudits.  Use
  Theorem~\ref{theorem7} to express the result, up to local unitaries, in the
  form given in \Eq{eqn200}.  Apply to this the inverse map $\Phi^{-1}$, noting    that unitaries can be pulled outside, as shown in \Eq{eqn273}.
\end{proof}

\begin{figure}
\begin{center}
\includegraphics{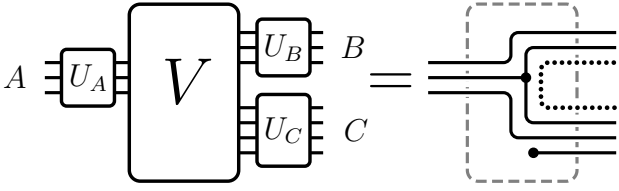}
\caption{Decomposition of the isometry of a $[[7,3]]_D$ stabilizer code into simple isometries as proven in Theorem \ref{theorem9}. The dotted line in the diagram on the right represents an EPR pair between parts $B$ and $C$.}
\label{isometry_decomp}
\end{center}
\end{figure}

An immediate corollary is that stabilizer code channels, up to
unitaries on the input and output spaces, are always tensor products of just
three types of simple channel; (i) Perfect quantum channel, (ii) Perfectly
decohering or phase-flip channel, (iii) Completely noisy or completely
depolarizing channel. This determines the quantum and classical capacities
since, e.g., the quantum capacity of $\EC_B$ is just the number of EPR pairs
that can be extracted from the $A$-$B$ bipartition of $\ket{V}$ multiplied by
$\log_2 D$, while the classical capacity is the quantum capacity plus the
number of GHZ states linking $A$, $B$, and $C$.  

Hence it follows that if $\EC$ is a stabilizer code channel and its classical and quantum channel capacities are known, then one can deduce the decomposition of the isometry as in \Eq{eqn357} and Figure \ref{isometry_decomp}, since the dimension of the input is known. From here it is straightforward to work out the classical and quantum channel capacities of the \emph{complementary} channel.

\subsection{Duality of Subset Information Groups, $\GC_B$ and $\GC_C$}
\label{subsection5d}

Here we shall demonstrate how the tensor product structure of stabilizer code
channels enables us to easily determine the subset information groups, $\GC_B$
and $\GC_C$. We also prove a duality relation between the $\GC_B$ and
$\GC_C$ which implies that each of them is determined by the other.

\xa
\outl{Definition of centralizer.}
\xb

Recall that the subset information groups $\GC_B$ and $\GC_C$ are subgroups of
the Pauli group $\PC_k$ on the $k$ input qudits, and that the
\emph{centralizer} $\mathrm{Cent}_{\GC} (\AC)$ of any subset $\AC$ of elements
of a group $\GC$ consists of all elements of $\GC$ that commute with every 
member of $\AC$. 

\begin{theorem}
\label{theorem10}
For $D$ prime, let $V: \HC_A \rightarrow \HC_B \otimes \HC_C$ be an isometry
corresponding to an $[[n, k]]_D$ stabilizer code, with $\PC_k$ the Pauli group
on the input qudits. Assume a $B$-$C$ bipartition of the $n$ output qudits is defined, with $\GC_B$ and $\GC_C$ the corresponding subset
information groups as defined in Eq.~\eqref{eqn351} and its counterpart with
$B$ replaced with $C$, for the stabilizer code channels $\EC_B$ and $\EC_C$.

Then $\GC_B$ and $\GC_C$ are dual in the sense that
\begin{align}
\label{eqn403}
\GC_B = \mathrm{Cent}_{\PC_k}(\GC_C) \quad \text{and} \quad 
\GC_C = \mathrm{Cent}_{\PC_k}(\GC_B),
\end{align}
i.e., each is the centralizer of the other in the group $\PC_k$.
\end{theorem}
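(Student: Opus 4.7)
The plan is to reduce the problem to the canonical tensor-product form provided by Theorem \ref{theorem9}. First I would apply Theorem \ref{theorem9} to find Clifford unitaries $U_A$, $U_B$, $U_C$ (Clifford because all unitaries in the proofs of Theorem \ref{theorem7} and Lemma \ref{lemma8} are built from the gates tabulated in Table \ref{table1} together with $\CP$ and $\mathrm{CNOT}$) such that $V' := U_B U_C V U_A$ is a tensor product of $V_{\mathrm{GHZ}}$ and $V_{\mathrm{EPR}}$ factors plus ancillary states on $\HC_B\otimes\HC_C$. Then for any $p\in\PC_k$, using that $U_B^\dag$ and $U_C^\dag$ are local to the respective outputs,
\begin{align}
\EC_B(p) = \Tr_C\{V p V^\dag\} = U_B^\dag \,\EC'_B(U_A^\dag p U_A)\, U_B,
\end{align}
so $p\in\GC_B$ iff $U_A^\dag p U_A\in\GC'_B$, and hence $\GC_B = U_A\GC'_B U_A^\dag$ (and similarly for $C$). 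Since $U_A$ is Clifford, it normalizes $\PC_k$, and conjugation by $U_A$ commutes with taking centralizers in $\PC_k$. It therefore suffices to prove the duality for $V'$.

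Next I would exploit the tensor product structure of $V'$. If $V'=V'_1\otimes V'_2$ with $V'_i:\HC_{A,i}\to\HC_{B,i}\otimes\HC_{C,i}$, a direct computation using $\EC_B(p_1\otimes p_2)=\EC_{B,1}(p_1)\otimes\EC_{B,2}(p_2)$ shows that $\GC'_B$ factors as the tensor product $\GC'_{B,1}\otimes\GC'_{B,2}$ (with the phase subgroup absorbed globally). An analogous statement holds for $\GC'_C$, and the ancillary states $\ket{\mathrm{EPR}}^{\otimes m_{BC}}$, $\ket{+}^{\otimes m_B}$, $\ket{+}^{\otimes m_C}$ contribute no input qudits and hence do not affect either information group. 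Moreover, the centralizer of a tensor product of subgroups of single-qudit Pauli groups inside $\PC_k$ is the tensor product of the individual centralizers (since any $r\in\PC_k$ is of the form $\lambda^\delta r_1\otimes\cdots\otimes r_k$ and commuting with $q=\lambda^\gamma I\otimes\cdots\otimes q_i\otimes\cdots\otimes I\in\GC'_B$ forces $r_iq_i=q_ir_i$ on each qudit separately). Thus the duality reduces to a single-qudit check on each of the factors $V_{\mathrm{EPR}}$ (of type $AB$ or $AC$) and $V_{\mathrm{GHZ}}$.

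Finally I would verify the single-qudit duality for each of the three building blocks, using the explicit computations already done in Subsection \ref{subsection5c}:
\begin{align}
\GC'_B|_{AB\text{-EPR}} &= \PC_1, \quad \GC'_C|_{AB\text{-EPR}} = \langle\lambda I\rangle,\\
\GC'_B|_{AC\text{-EPR}} &= \langle\lambda I\rangle, \quad \GC'_C|_{AC\text{-EPR}} = \PC_1,\\
\GC'_B|_{\mathrm{GHZ}} &= \GC'_C|_{\mathrm{GHZ}} = \langle\lambda I,Z\rangle.
\end{align}
In the first two cases the phase subgroup $\langle\lambda I\rangle$ is the center of $\PC_1$, so it and all of $\PC_1$ are mutual centralizers. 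In the third case, since $XZ=\omega ZX$, no Pauli outside $\langle\lambda I,Z\rangle$ commutes with $Z$, so $\langle\lambda I,Z\rangle$ is self-centralizing in $\PC_1$. Assembling the factors gives $\GC'_B=\mathrm{Cent}_{\PC_k}(\GC'_C)$ and conversely, and pulling back through $U_A$ yields the theorem.

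The main obstacle I anticipate is bookkeeping: ensuring that the conjugating $U_A$ from Theorem \ref{theorem9} really is Clifford (which requires tracing through the Choi–Jamio{\l}kowski map in Lemma \ref{lemma8}, noting that transposes of Cliffords in the computational basis are again Clifford), and keeping careful track of the global phase subgroup $\langle\lambda I\rangle$ when factoring $\GC'_B$ and $\GC'_C$ across the tensor product, since the phase is a global rather than per-qudit attribute of a Pauli product.
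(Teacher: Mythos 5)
Your proposal is correct and follows essentially the same route as the paper's proof: verify the duality for the elementary isometries $V_{\mathrm{EPR}}$ and $V_{\mathrm{GHZ}}$, invoke Theorem~\ref{theorem9} to reduce the general case to a tensor product of these (discarding the $\ket{+}$ and $B$--$C$ EPR factors, which involve no input qudits), observe that the information groups and their centralizers factor across the tensor product, and note invariance under the local unitaries. Your extra bookkeeping---checking that $U_A$ is Clifford (so that $\GC_B = U_A \GC'_B U_A^\dag$ lives in $\PC_k$ and conjugation commutes with $\mathrm{Cent}_{\PC_k}$) and tracking the global phase subgroup $\langle\lambda I\rangle$---merely makes explicit what the paper compresses into the remark that the duality is invariant under conjugation by unitaries.
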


\begin{proof}
  First we show that the duality relation holds for $V_{\mathrm{EPR}}$ and for
  $V_{\mathrm{GHZ}}$, as defined in Eqs.~\eqref{eqn353} and \eqref{eqn355}
  respectively. For $V_{\mathrm{EPR}}^{A_1 \rightarrow B_1}$, we have $\GC_B =
  \langle \lambda I_{A_1}, X_{A_1}, Z_{A_1} \rangle$ from the previous
  subsection. The complementary channel is $\EC_C(p) = \Tr_B\{VpV^\dag\} =
  \Tr\{p\}$, and by Eq.~\eqref{eqn351} we have $\GC_C = \langle \lambda
  I_{A_1} \rangle$. Thus $\GC_B$ is the full Pauli group, and $\GC_C$ consists
  of multiples of the identity, so they are centralizers of each other in the
  Pauli group on qudit $A_1$.  Next, for the isometry $V_{\mathrm{GHZ}}^{A_1
    \rightarrow B_1 C_1}$ we have $\GC_B = \GC_C = \langle \lambda I_{A_1},
  Z_{A_1} \rangle$ and again the duality is satisfied.
  
But Theorem~\ref{theorem9} says that $V$ is equivalent up to local unitaries to tensor products of these two types of elementary isometries, where we ignore the single-qudits states on $B$, $C$ and the EPR pairs between parts $B$ and $C$ as they are not involved in the transmission of information. It is straightforward to show that since the Pauli group for part $A$ and the information groups are themselves tensor products, the latter are again centralizers of each other.  Since the duality is invariant under conjugation by unitaries, the local unitaries play no role.
\end{proof}

Here is an illustrative example.  Suppose that
\begin{align}
\label{eqn359}
V = V_{\mathrm{EPR}}^{A_1 \rightarrow B_1} \otimes 
V_{\mathrm{EPR}}^{A_2 \rightarrow C_1} \otimes 
V_{\mathrm{GHZ}}^{A_3 \rightarrow B_2 C_2}.
\end{align}
Then the subset information groups have the same tensor products structure as the isometries above
\begin{align}
\label{eqn360}
\GC_B &= \langle \lambda I \rangle \otimes \langle X_{A_1}, Z_{A_1} \rangle \otimes \langle I_{A_2} \rangle  \otimes \langle Z_{A_3} \rangle \nonumber\\
&= \langle \lambda I, X_{A_1}, Z_{A_1}, Z_{A_3} \rangle, \nonumber \\
\GC_C &= \langle \lambda I \rangle \otimes \langle I_{A_1} \rangle  \otimes  \langle X_{A_2}, Z_{A_2} \rangle \otimes \langle Z_{A_3} \rangle \nonumber\\
&= \langle \lambda I, X_{A_2}, Z_{A_2}, Z_{A_3} \rangle.
\end{align}

\xa
\outl{Comments/implications of result.}
\xb

While the results presented in this section only hold for stabilizer code
isometries, we now show that they can sometimes be used to derive bounds for
more general isometries $V$, with channels $\EC_B$ and $\EC_C$ defined as in
\Eq{eqn350} for some bipartition of the output qudits. Given a general coding
space $\VC$, one can ask if there are stabilizer subspaces or subcodes
\emph{contained} in $\VC$. If these subcodes are not one-dimensional
subspaces, then meaningful lower bounds on channel capacities can be calculated for $\EC_B$ and $\EC_C$.

For example consider the nonadditive (or non-stabilizer) qubit graph code denoted by $((5,6,2))_2$, mentioned in \cite{PhysRevA.78.042303} and
\cite{quant-ph.0708.1021}. (It was first described in
\cite{PhysRevLett.79.953}, but not using the graph code formalism.) Also assume
the following bipartition of the five output qubits, $B = \{1, 2\}$ and
$C=\{3,4,5\}$. The six-dimensional coding space is spanned by
\begin{align}
\label{eqn361}
\VC &= \mathrm{span} \{ \ket{G}, Z_1 Z_2 Z_4 \ket{G}, Z_2 Z_3 Z_5 \ket{G}, 
\nonumber \\
& \quad Z_1 Z_3 Z_4 \ket{G}, Z_2 Z_4 Z_5 \ket{G}, Z_1 Z_3 Z_5 \ket{G} \},
\end{align}
where $\ket{G}$ is a five-qubit graph code stabilized by the group $\langle X_1 Z_2 Z_5, Z_1 X_2 Z_3, Z_2 X_3 Z_4, Z_3 X_4 Z_5, Z_1 Z_4 X_5 \rangle$ and the corresponding graph is a pentagon \cite{PhysRevA.78.042303}.

Obviously, $\VC$ contains the two-dimensional stabilizer subcode
$\VC_0 \subset \VC$,
\begin{align}
\label{eqn362}
\VC_0 &= \mathrm{span} \{ \ket{G}, Z_1 Z_2 Z_4 \ket{G} \}.
\end{align}
By applying the techniques used in proving Lemma~\ref{lemma8} and
Theorem~\ref{theorem9} one can show that $\VC_0$ corresponds to a one qubit
quantum channel from $A$ to $C$, which means that the quantum channel capacity
of $\EC_C$ is greater or equal to $\log_2(2)$. 
Similarly, by  considering the subcode
\begin{align}
\label{eqn363}
\VC_1 &= \mathrm{span} \{ \ket{G}, Z_2 Z_3 Z_5 \ket{G} \},
\end{align}
one can identify a GHZ state in the corresponding tripartition, and thus
deduce that the classical channel capacity of $\EC_B$ and of $\EC_C$ is at
least $\log_2(2)$.

Working from the other direction, one can also consider stabilizer codes that
\emph{contain} $\VC$ as a subcode. Provided these stabilizer codes are not the
whole Hilbert space, useful upper bounds on channel capacities of $\EC_B$ and
$\EC_C$ can be calculated.


\section{Conclusion}
\label{conclusion}

\xa
\outl{Summary.}
\xb

The most important results of our paper are those in Corollary~\ref{corollary5}, Theorem~\ref{theorem7}, and Theorem~\ref{theorem9}. The first of these allows stabilizer states for composite $D$ to be expressed as tensor products of stabilizer states associated with the different prime factors of $D$. This is a valuable technical tool used later in the paper, but also a helpful conceptual tool as it allows more complicated cases to be ``pulled apart'' into simpler situations. For example, a graph state for $D=6$ can be regarded as the tensor product of $D=2$ and $D=3$ graph states.

Our main result, Theorem~\ref{theorem7}, that a tripartite stabilizer state
can be considered the tensor product of single-qudit states, two-qudit EPR
pairs and three-qudit GHZ states, generalizes the $D=2$ (qubit) result in
\cite{quantph.0504208} to any squarefree $D>2$.  This allows us to provide a
very simple and essentially complete characterization, Theorem~\ref{theorem9},
up to unitaries on the input and output, of channels constructed from
stabilizer quantum codes.  Knowledge of the information carrying properties of
such a channel leads immediately tells one the properties of the complementary
channel, and there is a simple relationship between the corresponding subset
information groups introduced in \cite{PhysRevA.81.032326}. In some cases one can use these results to put bounds on capacities of other types of channel.

There are various directions in which one might hope to extend these results.
We have encountered technical difficulties in attempting to generalize our
tripartition theorem from squarefree to arbitrary composite $D$, where it may
no longer be true.  It follows from Corollary \ref{corollary5} that it is
sufficient to resolve the situation in which $D$ is a prime power, so if that
could be solved one could have results that apply for general $D$.   
Can any of our results be extended beyond the narrow confines of stabilizer
states?  The fact that the proofs depend heavily on group-theoretical
properties makes this seem unlikely, but it would certainly be of interest to
understand what it is that makes stabilizer states so special, and stabilizer
quantum codes so useful. 

Another possible direction is to move on from tripartitions to those involving four or more parts.  Here the results in \cite{quantph.0504208} for qubits suggest a situation that is distinctly more complicated than found for bipartitions and tripartitions, and $D>2$ is unlikely to be simpler. But it would still merit study.


\begin{acknowledgments}
The authors would like to thank Patrick Coles, Vlad Gheorghiu and Dan Stahlke for helpful comments on the manuscript. The research described here received support from the National Science Foundation through Grant No. PHY-0757251.
\end{acknowledgments}


\appendix

\section{Proof of Theorem \ref{theorem4}}
\begin{proof}
It is crucial to first note that there are \emph{two} different notions of tensor product used in the statement of the theorem and in this proof. The first and more obvious notion is the tensor product of $n$ qudits, each of dimension $D$. The second notion is defined on the space of each qudit of dimension $D$, which is isomorphic to a tensor product of $m$ spaces on qudits of dimensions $d_1, d_2, \ldots, d_m$.

First of all, we show that Pauli operators on a single qudit of dimension $D$ are equivalent to tensor products of $m$ Pauli products on the constituent qudits. We will be relying primarily on the Chinese Remainder Theorem which states that there exists a ring isomorphism between $\ZZ_D$ and the tensor product of rings, $\ZZ_{d_1} \otimes \ZZ_{d_2} \otimes \cdots \otimes \ZZ_{d_m}$ whenever $D$ has the prime decomposition in \Eq{eqn18}.

For brevity it is sufficient to prove the theorem for the case of $D = d_1 d_2$ where $d_1 = p_1^{\epsilon_1}$ and $d_2 = p_2^{\epsilon_2} \cdots p_m^{\epsilon_m}$ because $d_2$ can subsequently be further decomposed by induction. The Chinese Remainder ring isomorphism map, $\phi:\ZZ_D \rightarrow \ZZ_{d_1} \otimes \ZZ_{d_2}$ is defined as
\begin{align}
\label{eqnA1}
\phi(a) = (\phi_1(a), \phi_2(a)) 
\end{align}
where
\begin{align}
\label{eqnA2}
\phi_i(a) := a \bmod d_i. 
\end{align}
The inverse map $\phi^{-1} : \ZZ_{d_1} \otimes \ZZ_{d_2} \rightarrow \ZZ_D$ is given by
\begin{align}
\label{eqnA3}
\phi^{-1}(a_1, a_2) := a_1 r_1 d_2 + a_2 r_2 d_1 \bmod D
\end{align}
with $r_i := (D/d_i)^{-1} \bmod d_i$ being constants that depend only on $d_1, d_2$ and not inputs $a_1, a_2$. Note that $r_i$ is always coprime to $d_i$ for  $i=1, 2$.

Next we show the mapping $\phi$ induces a unitary transformation from a Hilbert space of dimension $D$ to a tensor product space of qudits of dimensions $d_1, d_2$. We define the action of the unitary $\UC : \HC_D \rightarrow \HC_{d_1} \otimes \HC_{d_2}$ on the $D$-dimensional basis kets as
\begin{align}
\label{eqnA4}
\UC \ket{a} &= \ket{\phi_1(a)} \otimes \ket{\phi_2(a)} \qquad \text{for $a = 0,1,\ldots, D-1$.} 
\end{align}
with $\phi_i$'s defined in Eq.~\eqref{eqnA2}. The fact that $\phi$ is bijective guarantees that $\{ \ket{\phi_1(a)} \otimes \ket{\phi_2(a)} \}_{a=0}^{D-1}$ spans the space $\HC_{d_1} \otimes \HC_{d_2}$.

The result of conjugating Pauli operator $X$ by this unitary is
\begin{align}
\label{eqnA5}
\UC X \UC^\dag &= \sum_{a=0}^{D-1} \ket{\phi_1(a)} \bra{\phi_1(a+1)} \otimes \ket{\phi_2(a)} \bra{\phi_2(a+1)} \nonumber \\
&= \sum_{a=0}^{D-1} \ket{\phi_1(a)} \bra{\phi_1(a)+1} \otimes \ket{\phi_2(a)} \bra{\phi_2(a)+1}  \nonumber \\
&=\sum_{a_1=0}^{d_1-1} \sum_{a_2=0}^{d_2-1} \ket{a_1} \bra{a_1+1} \otimes \ket{a_2} \bra{a_2+1} = X_1 \otimes X_2.
\end{align}
In the last line we simply replaced the sum over all elements of $\ZZ_D$ with the sum over all elements in $\ZZ_{d_1} \otimes \ZZ_{d_2}$. Note the subscript on $X$ here denotes two different qudits, each with different dimension.

Next the $Z$ operator is transformed as
\begin{align}
\label{eqnA6}
\UC Z \UC^\dag &= \sum_{a=0}^{D-1} \omega^a \ket{\phi_1(a)} \bra{\phi_1(a)} \otimes \ket{\phi_2(a)} \bra{\phi_2(a)} \nonumber \\
&= \sum_{a_i=0}^{d_i-1} \omega^{\phi^{-1}(a_1, a_2)} \ket{a_1} \bra{a_1} \otimes \ket{a_2} \bra{a_2} \nonumber \\
&= \bigotimes_{i=1}^2 \sum_{a_i=0}^{d_i-1} \mathrm{e}^{2  \pi \ii a_i r_i  /d_i} \ket{a_i} \bra{a_i} = Z_1^{r_1} \otimes Z_2^{r_2},
\end{align}
where $r_i$'s are defined in Eq.~\eqref{eqnA3}.

We are now ready to prove the theorem. Let $\AC$ be generated by $k$ elements, $\AC = \langle g^{(1)}, \ldots, g^{(k)} \rangle$ and the generators can always be chosen such that
\begin{align}
\label{eqnA7}
\prod_{i=1}^k \text{order} (g^{(i)}) = |\AC|,
\end{align}
where the order of every $g_i$ must be a divisor of $D$, see Subsection \ref{subsection2a} for our nonstandard definition of order. The requirement that $\AC$ be a collection of linearly independent Pauli products implies every generator satisfies $g_i^{\text{order} (g^{(i)})} = I$.

Consider an arbitrary generator, $g^{(i)}$ and define $\delta = \text{order} (g^{(i)})$. Let $\delta = p_1^{\xi_1} p_2^{\xi_2} \cdots p_m^{\xi_m}$ be the prime decomposition of $\delta$ with the same $p_i$'s as \Eq{eqn18}, and set $\delta_1 = p_1^{\xi_1}, \; \delta_2 = p_2^{\xi_2} \cdots p_m^{\xi_m}$, so $\delta = \delta_1 \delta_2$. Then $\delta_2$ is coprime to $\delta_1$ and also to $p_1$. Next define the unitary $\mathbb{U} = \UC \otimes \cdots \otimes \UC$ from Eq.~\eqref{eqnA4} acting on all the $n$ qudits. By Eqs.\eqref{eqnA5} and \eqref{eqnA6}, conjugating the first generator with $\mathbb{U}$ produces
\begin{align}
\label{eqnA8}
\mathbb{U} g^{(i)} \mathbb{U}^\dag = q_1 \otimes q_2
\end{align}
where $q_1, q_2$ are Pauli products on $n$ qudits of dimension $d_1, d_2$ respectively. Next we claim that
\begin{align}
\label{eqnA9}
\text{order} (q_1) = \delta_1 \quad \text{and} \quad \text{order} (q_2) = \delta_2.
\end{align}
To prove this, recall that $(g^{(i)})^{\delta_1 \delta_2} = I$ which implies $ q_1^{\delta_1 \delta_2} \propto I_1$. Since $\delta_2$ is coprime to $\delta_1$ and also to $d_1$, the unique multiplicative inverse of $\delta_2 \mod d_1$ exists. Then it follows that $q_1^{\delta_1 \delta_2 \delta_2^{-1}} = q_1^{\delta_1} \propto I_1$. Therefore, the order of $q_1$ must be a divisor of $\delta_1$ and by the same reasoning, the order of $q_2$ must be a divisor of $\delta_2$. The orders cannot be less than $\delta_1, \delta_2$ respectively as that would imply the order of $g^{(i)}$ is less than $\delta$.

Having proven \Eq{eqnA9}, we define  Pauli products of qudits of dimension $d_1$ and $d_2$, $h_1^{(i)}$ and $h_2^{(i)}$ respectively as
\begin{align}
\label{eqnA10}
I_1 \otimes h_2^{(i)} &:= \left( \mathbb{U} g^{(i)} \mathbb{U}^\dag \right)^{\mu_2 \delta_1} \nonumber \\
h_1^{(i)} \otimes I_2 &:= \left( \mathbb{U} g^{(i)} \mathbb{U}^\dag \right)^{\mu_1 \delta_2} 
\end{align}
where $\mu_i := (\delta/\delta_i)^{-1} \bmod \delta_i$. Next for all $\alpha_1 \in \ZZ_{\delta_1}$ and $\alpha_2 \in \ZZ_{\delta_2}$, we can rewrite \Eq{eqnA10} as
\begin{align}
\label{eqnA12}
\left( h_1^{(i)} \right)^{\alpha_1} \otimes \left( h_2^{(i)} \right)^{\alpha_2} &= \left( \mathbb{U} g^{(i)} \mathbb{U}^\dag \right)^{\alpha_1 \mu_1 \delta_2 + \alpha_2 \mu_2 \delta_1}.
\end{align}
Observe that the exponent on the right side is just the inverse Chinese Remainder map of $\alpha_1$ and $\alpha_2$ in\Eq{eqnA3} applied to this situation, so
\begin{align}
\label{eqnA13}
\left\langle \mathbb{U} g^{(i)} \mathbb{U}^\dag \right\rangle = \langle h^{(i)}_{1} \rangle \otimes \langle h^{(i)}_{2} \rangle.
\end{align}

Decomposing every generator into two generators on subsystems of dimensions $d_1$ and $d_2$ gives us
\begin{align}
\label{eqnA14}
\mathbb{U} \AC \mathbb{U}^\dag &= \langle \mathbb{U} g^{(1)} \mathbb{U}^\dag, \ldots, \mathbb{U} g^{(k)} \mathbb{U}^\dag \rangle   \nonumber \\
&= \langle h^{(1)}_{1}, \ldots, h^{(k)}_{1} \rangle \otimes \langle h^{(1)}_{2}, \ldots, h^{(k)}_{2} \rangle \nonumber \\
&= \AC_1 \otimes \AC_2.
\end{align}
That $\AC_1 = \langle h^{(1)}_{1}, \ldots, h^{(k)}_{1} \rangle$ and $\AC_2 = \langle h^{(1)}_{2}, \ldots, h^{(k)}_{2} \rangle$ form collections of mutually commuting Pauli products are consequences of the fact that $\{g^{(i)}\}$ are mutually commuting and that conjugation by $\mathbb{U}$ does not change the commutation relation. Finally, \Eq{eqnA14} also tells us there are as many linearly independent elements in $\AC$ as there are in $\AC_1 \otimes \AC_2$, which implies that $|\AC| = |\AC_1| \cdot |\AC_2|$.
\end{proof}



\end{document}